\documentclass{article}

\PassOptionsToPackage{numbers, compress}{natbib}
\usepackage[preprint]{neurips_2026}

\usepackage[utf8]{inputenc} 
\usepackage[T1]{fontenc}    
\usepackage{hyperref}      
\usepackage{url}            
\usepackage{booktabs}       
\usepackage{amsfonts}       
\usepackage{nicefrac}       
\usepackage{microtype}      
\usepackage{xcolor}         

\definecolor{darkblue}{rgb}{0, 0, 0.5}
\hypersetup{colorlinks=true, citecolor=darkblue,linkcolor=darkblue,urlcolor=darkblue} 
\usepackage{algorithm}
\usepackage{algorithmic}
\usepackage{amsmath}
\usepackage{amssymb}
\usepackage{mathtools}
\usepackage{amsthm}

\theoremstyle{definition} 
\newtheorem{theorem}{Theorem}[section]
\newtheorem{proposition}[theorem]{Proposition}

\theoremstyle{definition}

\theoremstyle{remark}

\usepackage{xspace}
\usepackage{siunitx}
\usepackage{graphicx}
\newcommand{\cgfsoftmax}{CGF-softmax\xspace}

\usepackage{multirow}

\title{CGF-Softmax: A Cumulant-Based Softmax Reformulation for Efficient Inference under Homomorphic Encryption}

\author{%
Hanjun Park$^{1}$\thanks{Equal contribution} \quad
Byeongseo Min$^{1}$\footnotemark[1] \quad
Jiheon Woo$^{1}$ \quad
Min-Wook Jeong$^{2}$ \quad \\
\textbf{Jongho Shin}$^{2}$ \quad
\textbf{Yongwoo Lee}$^{3}$ \quad
\textbf{Young-Sik Kim}$^{4}$ \quad
\textbf{Yongjune Kim}$^{1}$\thanks{Corresponding author} \\
$^{1}$Pohang University of Science and Technology (POSTECH) \quad \\
$^{2}$LG Electronics R\&D Center \quad
$^{3}$Inha University \quad \\
$^{4}$Daegu Gyeongbuk Institute of Science and Technology (DGIST) \quad \\
\texttt{\{parkterry, minbyeongseo, jhwoo1997, yongjune\}@postech.ac.kr} \\
\texttt{\{minwook.jeong, jongho0.shin\}@lge.com} \\
\texttt{yongwoo@inha.ac.kr, ysk@dgist.ac.kr}
}

\begin{document}

\maketitle

\begin{abstract}
Homomorphic encryption (HE) is a prominent framework for privacy-preserving machine learning, enabling inference directly on encrypted data.
However, evaluating softmax, a core component of transformer architectures, remains particularly challenging in HE due to its multivariate structure, the large dynamic range induced by exponential functions, and the costly division operation.
In this paper, we propose CGF-softmax, which reformulates the softmax denominator through the cumulant generating function (CGF).
By eliminating both homomorphic division and explicit maximum subtraction, this reformulation substantially reduces multiplicative depth while preserving key properties of softmax. 
Extensive experiments on Vision Transformers and large language models show that CGF-softmax provides an efficient and accurate approximation of softmax in encrypted inference.
In particular, it achieves inference accuracy close to that of high-depth exact methods, while requiring substantially lower computational cost through reduced multiplicative depth.
\end{abstract}

\section{Introduction}
\label{sec:Introduction}
The rapid advancement of artificial intelligence (AI) has accelerated the adoption of machine learning as a service (MLaaS), where users increasingly rely on remotely hosted models for inference across a wide range of applications.
However, this paradigm inherently raises privacy concerns, as it typically requires exposing sensitive user data to remote infrastructure. 
To address these risks, privacy-preserving machine learning (PPML) has emerged as a critical research area aimed at enabling secure inference without compromising the confidentiality of user data~\cite{Al-Rubaie2019Privacy,Riazi2019deep}. 

One promising direction within PPML is to employ homomorphic encryption (HE)~\cite{Rivest1978On,Gentry2009fully}, which enables computation directly over encrypted data.
A broad line of prior work has leveraged HE to realize HE-based PPML inference pipelines that operate entirely in the encrypted domain.
In practice, HE-based PPML protocols can be realized in either interactive or strictly non-interactive settings.
Interactive protocols--often instantiated via secure multi-party computation (MPC) \cite{Evans2018pragmatic}--require client participation during inference and therefore incur substantial communication overhead.
In contrast, strictly non-interactive HE allows the server to execute inference autonomously once ciphertexts are received, eliminating communication and client-side burden at the cost of increased server-side computation.
In this work, we focus on the strictly non-interactive setting, where avoiding interaction during inference is essential for fully outsourced inference.

A primary challenge of HE-based PPML lies in the implementation of non-polynomial operations.
Most HE schemes natively support only addition and multiplication, rendering the direct evaluation of non-polynomial functions infeasible.
As a result, non-polynomial activation functions are typically approximated using polynomials, which can be evaluated homomorphically.
However, achieving high approximation accuracy generally requires high-degree polynomials, leading to excessive computational latency and noise growth.
This leads to increased multiplicative depth, which consumes the limited level budget and necessitates costly bootstrapping operations.
To reduce this overhead, prior work has often resorted to simplified surrogate functions or low-degree polynomial approximations, at the cost of degraded model accuracy~\cite{Lou2020safenet,Lou2021hemet,Chen2022the,Rho2025encryption}.

This limitation is particularly pronounced in transformer-based architectures~\cite{Vaswani2017attention,Brown2020language,Dosovitskiy2021image,Dubey2024llama}, where the \emph{softmax} function plays a central role in the self-attention mechanism.
Unlike univariate activation functions, softmax has a multivariate structure where each output depends on the entire input vector through a shared normalization term.
Evaluating softmax in a strictly non-interactive HE setting therefore constitutes a major computational bottleneck, as its standard formulation involves the exponential function and division, which are inherently expensive to approximate homomorphically~\cite{Hong2022Secure,Cho2024fast}.
Moreover, numerical stabilization techniques commonly employed in plaintext inference, such as subtracting the maximum input value, require comparison operations that are fundamentally incompatible with HE.

In response to this challenge, existing approaches to homomorphic softmax evaluation largely fall into two categories.
Methods commonly referred to as \emph{softmax approximation} aim to evaluate the softmax function with high precision under HE, but are typically computationally expensive due to high-degree polynomial approximations for the exponential function and division~\cite{Hong2022Secure,Cho2024fast,DeCastro2025encryptedllm,Lim2025tricycle}. 
To reduce this computational cost, an alternative strategy denoted as \emph{softmax replacement} replaces the original softmax function with simpler surrogate functions~\cite{Zimerman2024converting,Zimerman2024power,Park2025Powerformer}.
While such replacements can be effective for relatively simple natural language processing (NLP) tasks, our experiments show that they suffer from substantial accuracy degradation in more demanding settings, including large-scale image classification and complicated NLP tasks. 
These limitations highlight the need for a softmax evaluation method that significantly reduces computational cost while preserving high accuracy on complex tasks. 

In this paper, we propose \textbf{CGF-softmax}, a novel method that \emph{reformulates} the softmax function by leveraging the cumulant generating function (CGF).
CGF-softmax addresses the aforementioned challenges by eliminating the need for homomorphic division as well as explicit maximum subtraction commonly used for numerical stabilization. 
By removing these computationally expensive operations, particularly the comparison-heavy max operation, our method substantially reduces the multiplicative depth compared to existing softmax approximation approaches.
This depth reduction is critical as it minimizes the need for bootstrapping, directly translating to significant savings in computational cost.
We further derive key theoretical properties of CGF-softmax and provide an asymptotic bound on its approximation error relative to the exact softmax function.
Experimental results demonstrate that CGF-softmax achieves a significant reduction in the level consumption, outperforming the softmax approximation baseline~\cite{Cho2024fast}.
Moreover, CGF-softmax consistently maintains high accuracy across diverse tasks, achieving less than a \SI{1}{\%} accuracy drop for Vision Transformers (ViT/DeiT)~\cite{Dosovitskiy2021image,Touvron2021training} on ImageNet-1k~\cite{Deng2009imagenet} as well as large language model (LLaMA-3.2-1B~\cite{Meta2024meta}) on Clinc150~\cite{Larson2019evaluation}, Banking77~\cite{Casanueva2020efficient} and SST-2~\cite{Wang2018glue}.
These benchmarks are particularly challenging in homomorphic inference, where existing softmax replacement methods exhibit significant degradation.

Our contributions are summarized as follows:
\begin{itemize}
\vspace{-1mm}
    \item \textbf{CGF-softmax Formulation:} We propose CGF-softmax, a novel reformulation of the softmax function that eliminates the need for homomorphic division and maximum subtraction, thereby substantially reducing computational cost.
    \item \textbf{Theoretical Analysis:} We provide a theoretical analysis establishing key properties of CGF-softmax and an asymptotic bound on its approximation error.
    \item \textbf{Experimental Validation:} We demonstrate through extensive experiments on Vision Transformers and a large language model that CGF-softmax achieves inference accuracy close to exact softmax while significantly reducing inference cost under homomorphic encryption.
\end{itemize}
\newpage
\section{Background and Related Works} \label{sec:background}
\subsection{Fully Homomorphic Encryption}
Fully homomorphic encryption (FHE) is a cryptographic primitive that enables computations to be performed directly on encrypted data without decryption.
Among existing schemes, we utilize the CKKS scheme~\cite{Cheon2017homomorphic}, which supports approximate arithmetic operations over encrypted real numbers.

Specifically, CKKS employs a packing technique that encrypts a vector of values into a single ciphertext. 
We denote the number of these values, i.e., the number of slots in a ciphertext, as $s$.
Supported homomorphic operations include element-wise addition ($\mathsf{Add}$), cyclic rotation ($\mathsf{Rot}$), and multiplication.
Regarding multiplication, we distinguish between \emph{Ciphertext--Plaintext Multiplication} ($\mathsf{PMult}$), which involves a plaintext constant, and the computationally expensive \emph{Ciphertext--Ciphertext Multiplication} ($\mathsf{CMult}$).

As a leveled FHE scheme, CKKS assigns a finite budget of multiplicative levels $L$ to each ciphertext.
Each $\mathsf{CMult}$ (and non-integer $\mathsf{PMult}$) consumes one level, and the cumulative consumption determines the circuit \emph{depth}.
Once the level budget is exhausted, \emph{bootstrapping} ($\mathsf{Boot}$) is required to refresh the ciphertext, enabling further computation at the cost of significant overhead.
Consequently, we evaluate HE algorithmic efficiency based on the multiplicative depth and the counts of dominant operations, specifically $\mathsf{CMult}$ and $\mathsf{Rot}$.

\subsection{Homomorphic Softmax Evaluation}
\label{sec:def_and_challenges}

We first recall the standard softmax function in the plaintext setting.
For an input vector $\mathbf{x} = (x_1, \cdots, x_n)$, the $i$-th component is defined as
\begin{equation}
    \text{softmax}(\mathbf{x})_i
    =
    \frac{\exp(x_i)}{\sum_{j=1}^n \exp(x_j)}
    =
    \frac{\exp(x_i - x_\text{max})}{\sum_{j=1}^n \exp(x_j - x_\text{max})},
    \label{eq:stable_softmax}
\end{equation}
where $i \in \{1,\ldots,n\}$ and $x_\text{max} = \max_{i \in \{1,\ldots,n\}} x_i$.
In standard floating-point implementations, the second form is typically used as a numerically stable formulation to prevent overflow from the exponential function.
Evaluating softmax directly in the HE setting presents two major challenges:

\begin{enumerate}
    \item \textbf{Overflow and Maximum Subtraction:} 
    The rapid growth of the exponential function renders softmax vulnerable to numerical overflow when processing large input values.
    In the homomorphic setting, larger ranges necessitate polynomial approximations over wider intervals, further increasing computational cost and exacerbating approximation error~\cite{Cheon2022efficient,Cho2024fast}.
    To ensure numerical stability, the input vector to softmax is typically shifted using the maximum operation, as shown in Eq.~\eqref{eq:stable_softmax}.
    However, the maximum operation is not natively supported by HE schemes and must be approximated, which incurs substantial multiplicative depth.
    
    \item \textbf{Division Operation:} 
    Softmax inherently involves a division operation for normalization.
    Since HE schemes do not natively support division, prior work~\cite{Hong2022Secure,Lim2025tricycle,DeCastro2025encryptedllm} typically addresses this challenge by employing polynomial approximations of the reciprocal function $1/x$, such as Chebyshev polynomial approximation~\cite{Cheney1966introduction} or Goldschmidt's iterative algorithm~\cite{Goldschmidt1964applications}, leading to increased multiplicative depth.
\end{enumerate}

\subsection{Prior Work}

In the strictly non-interactive setting, prior work can be broadly categorized into two classes: softmax approximation and softmax replacement.
While the former directly approximates the softmax function using polynomials, the latter replaces softmax with simpler surrogate functions that are also realized through polynomial approximations.

\paragraph{Softmax Approximation.}

Prior work on \emph{softmax approximation} aims to approximate the original softmax function using polynomial functions. 
Following an early work on homomorphic softmax approximation~\cite{Hong2022Secure}, subsequent works focus on mitigating exponential overflow. 

The first strategy stabilizes computation by subtracting the maximum value $x_{\max}$.
HETAL~\cite{Lee2023hetal} computes the exact maximum; however, this incurs high computational cost due to comparison operations, which should be approximated by polynomials. 
Subsequent frameworks therefore adopt heuristic or statistical estimates to approximate $x_{\max}$ more efficiently.
EncryptedLLM~\cite{DeCastro2025encryptedllm} uses a fixed empirical maximum derived from training data.
Recognizing the instability of fixed values, Tricycle~\cite{Lim2025tricycle} and ARION~\cite{Yang2025arion} estimate an upper bound on the \emph{expected} maximum based on statistical assumptions.

The second strategy addresses overflow by compressing the input interval.
\citet{Cho2024fast} introduce the \emph{normalize-and-square} algorithm, which evaluates softmax on scaled-down inputs to ensure numerical stability and subsequently recovers the original outputs via the iterative identity:
\begin{equation}
\text{softmax}\left(\mathbf{x}/2^{k-1}\right)_i = \frac{\text{softmax}\left(\mathbf{x}/2^{k}\right)_i^2}{\sum_{j=1}^n \text{softmax}\left(\mathbf{x}/2^{k}\right)_j^2}.
\label{eq:ccs_identity}
\end{equation}
THOR~\cite{Moon2025thor} adopts this framework, differing primarily in the choice of approximation algorithms for the exponential function and division.

Crucially, regardless of the overflow mitigation strategy, all aforementioned methods rely on polynomial approximations for both the exponential function and the computationally expensive homomorphic division, resulting in increased multiplicative depth.

\begin{table}[t]
\caption{Comparison with prior softmax approximation methods. \cgfsoftmax eliminates both max and division operations while remaining adaptive to input statistics. (R: Required, NR: Not Required)} \label{tab:benefitofcgf}
\centering
    \resizebox{\textwidth}{!}{
    \begin{tabular}{cccccc}
         \toprule          
         \multirow{2}{*}{\textbf{Operations}} & \multirow{2}{*}{HETAL~\cite{Lee2023hetal}} & \multirow{2}{*}{EncryptedLLM~\cite{DeCastro2025encryptedllm}} & Tricycle~\cite{Lim2025tricycle} & \citet{Cho2024fast} & \multirow{2}{*}{\textbf{\cgfsoftmax}}\\
         & & & ARION~\cite{Yang2025arion} & THOR~\cite{Moon2025thor} &\\
         \midrule[0.8pt]
         Max & Comparison-Based Approx. & Fixed (Empirical) & Expected Max Upper Bound & Fixed (Domain Scaling) &  \textbf{NR} \\
         \cmidrule{1-6}
         Division & R & R & R & R & \textbf{NR} \\
         \cmidrule{1-6}
         Data-Adaptivity & \textbf{Yes} & No & \textbf{Yes} & No &  \textbf{Yes} \\
         \bottomrule
    \end{tabular}
    }
\end{table}

\paragraph{Softmax Replacement.}

Distinct from softmax approximation, \emph{softmax replacement} methods substitute the computationally expensive softmax operation with simpler surrogate functions.
Several works replace softmax with element-wise activation functions, including standard activations (e.g., ReLU and GELU), their polynomial variants (e.g., squared ReLU), or activations augmented with auxiliary neural networks~\cite{Zimerman2024converting,Chen2022the}.
\citet{Rho2025encryption} pursue an alternative direction by adopting Gaussian kernels in place of softmax.
Another line of work replaces only the exponential term $\exp(x_i)$ with power functions or their shifted variants~\cite{Li2023mpcformer,Luo2024secformer,Zimerman2024power}, while retaining the division operation for normalization, e.g., Power-softmax~\cite{Zimerman2024power}, $x_i^{p}/{\sum_{j} x_j^{p}}$ with a positive even $p$.
BPMax from Powerformer~\cite{Park2025Powerformer} further eliminates the division operation by replacing the denominator with a fixed empirical value, e.g., $(x_i + c)^p$ with a positive odd $p$.

However, these softmax replacement methods face inherent limitations: they either suffer from accuracy degradation~\cite{Zimerman2024converting,Rho2025encryption} or provide limited speedup due to the retained division operation~\cite{Li2023mpcformer,Luo2024secformer,Zimerman2024power}.
Furthermore, even recent replacement methods~\cite{Park2025Powerformer,Luo2024secformer,Rho2025encryption} are primarily evaluated on small-scale models such as BERT~\cite{Devlin2019bert} and benchmarks with small label spaces (e.g., binary or ternary classification).
The effectiveness of such methods on larger models and more challenging benchmarks remains underexplored. 

\section{Softmax Reformulation via Cumulant Generating Function} \label{sec:CGF-softmax}

In this section, we introduce a novel reformulation of the softmax function based on the CGF that enables efficient evaluation under HE.

\subsection{CGF-softmax}

\paragraph{CGF-softmax.}

We introduce CGF-softmax to address the challenges of softmax in HE described in Section~\ref{sec:def_and_challenges}.
Given an input vector $\mathbf{x} = (x_1, \cdots, x_n)$, we interpret the softmax denominator as a scaled sample mean of exponential terms.
Let $X$ denote a random variable representing the distribution of input components, and let $M_X(t) \triangleq \mathbb{E}[\exp(tX)]$ denote its MGF.
Under this interpretation, the softmax denominator $\sum_{i=1}^{n} \exp(x_i)$ is replaced by $n M_X(1)$ by substituting the sample mean with the ensemble mean.
Accordingly, the $i$-th component of \cgfsoftmax is given by
\begin{equation}
\text{softmax}_{\text{CGF}}(\mathbf{x})_{i} = \frac{\exp(x_i)}{n M_X (1)}. 
\label{eq:mgf-softmax}
\end{equation}

To make the structure explicit, we introduce the cumulant generating function (CGF) of $X$~\cite{Casella2024Statistical}, defined as $K_X(t) \triangleq \ln M_X(t)$.\label{eq:cgf}
Using the CGF, Eq.~\eqref{eq:mgf-softmax} is expressed in an explicit single-exponential form as
\begin{align}
\text{softmax}_{\text{CGF}} (\mathbf{x})_{i}
& = \exp \left (x_i - K_X(1) - \ln n \right ).  
\label{eq:cgf-softmax}
\end{align}
Moreover, $K_X(1)$ can be expanded as
\begin{equation} \label{eq:cgf-cumulants}
K_X(1) = \sum_{j=1}^{\infty} \frac{\kappa_j}{j!},
\end{equation}
where $\kappa_j$ denotes the $j$-th \emph{cumulant} of $X$. 
This representation is particularly suitable for HE since cumulants, such as $\kappa_1=\mu$ (mean) and $\kappa_2=\sigma^2$ (variance) can be efficiently computed using basic HE operations.

\paragraph{Advantages of CGF-softmax.}

By reformulating softmax via CGF, \cgfsoftmax achieves both computational efficiency and improved approximation stability via a data-adaptive shift.
This reformulation leads to the following key advantages:
\begin{enumerate}
    \item \cgfsoftmax eliminates the need for a $\max$ operation. 
    All input components are shifted by $K_X(1) + \ln n$ prior to exponential evaluation, which preserves numerical stability.
    \item \cgfsoftmax removes the division operation and correspondingly reduces the multiplicative depth, as the normalization term is absorbed into the exponent of the single exponential. 
    \item \cgfsoftmax is adaptive to the input statistics through cumulants. 
    This data-adaptivity leads to more stable approximation behavior across varying input samples.
\end{enumerate} 
As a result, \cgfsoftmax directly addresses the challenges of softmax evaluation in HE outlined in Section~\ref{sec:def_and_challenges}. 
Table~\ref{tab:benefitofcgf} compares \cgfsoftmax with prior softmax approximation methods and summarizes the advantages of the proposed approach.
In particular, it shows that \cgfsoftmax uniquely eliminates both $\mathrm{max}$ and division operations while remaining adaptive to input statistics.

\subsection{Properties of CGF-softmax} \label{subsec:properties}

In this subsection, we present key properties of \cgfsoftmax that reveal its advantages under HE. 

\paragraph{(P1) Positivity.}

\cgfsoftmax is strictly \emph{positive} for every component by Eq.~\eqref{eq:cgf-softmax}.

This property is not automatic for power function-based softmax replacements. 
For example, BPMax~\cite{Park2025Powerformer} can preserve ordering when $p$ is a positive odd integer, but if the shift $c$ is not large enough to ensure $x_i+c>0$ for all $i$, it may produce nonpositive outputs.

\paragraph{(P2) Order and Pairwise-Ratio Preservation.}

A softmax replacement should preserve the \emph{relative importance} of the input logits.
CGF-softmax satisfies this requirement by preserving both logit ordering and the exact pairwise output ratios of the standard softmax.

\begin{proposition}
\cgfsoftmax preserves the relative ordering of the input logits and has the same pairwise output ratios as the standard softmax. Specifically, for all $i,j\in\{1,\ldots,n\}$,
\begin{equation}
x_i>x_j
\;\Rightarrow\;
\text{softmax}_{\text{CGF}}(\mathbf{x})_i
>
\text{softmax}_{\text{CGF}}(\mathbf{x})_j,
\end{equation}
and
\begin{equation}
\frac{\text{softmax}(\mathbf{x})_i}
     {\text{softmax}(\mathbf{x})_j}
=
\frac{\text{softmax}_{\text{CGF}}(\mathbf{x})_i}
     {\text{softmax}_{\text{CGF}}(\mathbf{x})_j}
=
\exp(x_i-x_j).
\end{equation}
\end{proposition}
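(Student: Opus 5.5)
The plan is to work directly from the single-exponential form in Eq.~\eqref{eq:cgf-softmax}. The key observation is that the shift $K_X(1)+\ln n$ does not depend on the index $i$: it is a statistic of the whole input vector (through the distribution of $X$) and of $n$ alone. We therefore abbreviate $C \triangleq K_X(1)+\ln n$ and write $\text{softmax}_{\text{CGF}}(\mathbf{x})_i=\exp(x_i-C)$ for every $i$, with the same constant $C$ throughout. For this to make sense we only need $M_X(1)$ to be finite and positive. This holds because $\exp(\cdot)>0$ and, for the empirical distribution of the components, $M_X(1)$ is a finite average of positive terms. Consequently $C$ is a well-defined real number.

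\textbf{Order preservation.} Take $x_i>x_j$. Then $x_i-C>x_j-C$, and since $\exp$ is strictly increasing on $\mathbb{R}$, we get $\exp(x_i-C)>\exp(x_j-C)$, which is the desired inequality.

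\textbf{Pairwise ratios.} All quantities are strictly positive (property (P1) for the CGF version, and positivity of $\exp$ for standard softmax), so the ratios are well-defined. For the standard softmax, Eq.~\eqref{eq:stable_softmax} gives
\[
\frac{\text{softmax}(\mathbf{x})_i}{\text{softmax}(\mathbf{x})_j}=\frac{\exp(x_i)}{\exp(x_j)}=\exp(x_i-x_j),
\]
because the common denominator $\sum_k \exp(x_k)$ cancels. For the CGF version, the common factor $\exp(-C)$ cancels in the same way:
\[
\frac{\exp(x_i-C)}{\exp(x_j-C)}=\exp(x_i-x_j).
\]
Chaining these two equalities gives the claim.

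The only potential obstacle is justifying that $C$ is truly index-independent and finite. This is the point where one must fix the interpretation of $X$ as a single random variable describing all components, not a per-component quantity. If instead a truncated cumulant expansion from Eq.~\eqref{eq:cgf-cumulants} is used, the argument is unchanged, provided the truncation is still a common scalar computed from $\mathbf{x}$.
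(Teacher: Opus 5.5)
Your proof is correct and is essentially the paper's own argument: the paper observes that the numerator $\exp(x_i)$ is strictly increasing and the denominator $nM_X(1)$ is a positive constant common to all indices, which is exactly your index-independent shift $C = K_X(1)+\ln n$ preserving order under the monotone $\exp$ and cancelling in the ratio. One minor caution: in the paper $X$ is an ensemble variable whose mean $M_X(1)$ deliberately differs from the sample mean $\frac{1}{n}\sum_i \exp(x_i)$ (otherwise CGF-softmax would coincide with softmax and the relative error $\eta$ would be identically zero), so finiteness of $M_X(1)$ is a standing assumption rather than a consequence of using the empirical distribution, though this does not affect your argument.
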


These properties follow directly from Eq.~\eqref{eq:cgf-softmax}: the exponential numerator is strictly increasing, and the denominator $nM_X(1)$ is a positive constant common across all indices.

In contrast, power-function-based replacements preserve ordering only under suitable conditions and generally do not preserve the exact exponential pairwise ratios of the standard softmax. 
Power-softmax~\cite{Zimerman2024power} with even $p$ preserves ordering only over nonnegative logits, since $x\mapsto x^p$ is not strictly increasing on $\mathbb{R}$, and its ratios are polynomial, $(x_i/x_j)^p$, when defined. 
BPMax~\cite{Park2025Powerformer} preserves ordering when $p$ is a positive odd integer and the denominator is a positive constant common across indices, but its ratios are also polynomial, $\left((x_i+c)/(x_j+c)\right)^p$, rather than $\exp(x_i-x_j)$.

\paragraph{(P3) Shift Invariance.} 

The standard softmax is \emph{shift-invariant}, i.e., $\text{softmax}(\mathbf{x}) = \text{softmax}(\mathbf{x} - c)$ for any constant $c$. 
This property is commonly used to improve numerical stability (see Eq.~\eqref{eq:stable_softmax}).
\cgfsoftmax also preserves this shift-invariance property. 

\begin{proposition}    \label{prop:shiftinvar}
    \cgfsoftmax is shift-invariant. For any constant $c \in \mathbb{R}$, 
    \begin{equation}
    \text{softmax}_{\text{CGF}}(\mathbf{x}) = \text{softmax}_{\text{CGF}}(\mathbf{x}-c).
    \end{equation}     
\end{proposition}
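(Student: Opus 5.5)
The plan is to work directly from the single-exponential form in Eq.~\eqref{eq:cgf-softmax} and track how the cumulant generating function transforms under a translation of the underlying random variable. Shifting the input vector $\mathbf{x}$ to $\mathbf{x}-c$ corresponds, in the interpretation of Section~\ref{sec:CGF-softmax}, to replacing the random variable $X$ (whose distribution is that of the input components) by $X-c$. So the proof reduces to two ingredients: the exponent $x_i$ becomes $x_i-c$, and $K_X(1)$ becomes $K_{X-c}(1)$.

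The key step is the translation identity for the CGF. Since $c$ is deterministic, the MGF factors as
\begin{equation*}
M_{X-c}(t)=\mathbb{E}[\exp(t(X-c))]=\exp(-tc)\,M_X(t).
\end{equation*}
Taking logarithms gives $K_{X-c}(t)=K_X(t)-tc$, and at $t=1$ this is $K_{X-c}(1)=K_X(1)-c$. Equivalently, in terms of the cumulant expansion Eq.~\eqref{eq:cgf-cumulants}, only $\kappa_1=\mu$ shifts by $-c$, while all higher cumulants $\kappa_j$ with $j\ge 2$ are translation-invariant. This cross-check is also reassuring for truncated versions of the expansion.

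Substituting into Eq.~\eqref{eq:cgf-softmax}:
\begin{equation*}
\text{softmax}_{\text{CGF}}(\mathbf{x}-c)_i=\exp\bigl((x_i-c)-(K_X(1)-c)-\ln n\bigr)=\text{softmax}_{\text{CGF}}(\mathbf{x})_i .
\end{equation*}
Since this holds for every $i$, the vector identity follows. Alternatively, the same cancellation can be seen in the MGF form Eq.~\eqref{eq:mgf-softmax}: a factor $\exp(-c)$ appears in both numerator and denominator.

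The only point needing care is justifying that the distribution attached to $\mathbf{x}-c$ is exactly that of $X-c$. For instance, under the empirical distribution of the components, every atom is translated by $c$. We also need $M_X(1)$ finite, which is immediate for a finite input vector. With these in place, the argument holds for any $c\in\mathbb{R}$ and is otherwise routine.
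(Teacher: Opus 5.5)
Your proof is correct and is essentially the paper's argument: the paper works in the MGF form and cancels the common factor $\exp(-c)$ between $\exp(x_i-c)$ and $n\,\mathbb{E}[\exp(X-c)]$, which is exactly your alternative, and your identity $K_{X-c}(1)=K_X(1)-c$ is the same computation in log space. Your observation that only $\kappa_1$ changes under translation is a worthwhile extra, since it shows the second-order instantiation $K_X(1)\approx\mu+\sigma^2/2$ used in practice is also shift-invariant; note, however, that finiteness of $M_X(1)$ is automatic only under the empirical-distribution reading, whereas for the population $X$ the paper has in mind (cf.\ its error analysis) it is a standing assumption rather than a consequence of $\mathbf{x}$ being finite.
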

\begin{proof}
The proof is given in Appendix~\ref{appendix:cgf_shiftinvariant}. 
\end{proof}
In contrast, softmax replacement methods~\cite{Luo2024secformer,Zimerman2024power,Park2025Powerformer} are generally not shift-invariant.

\paragraph{(P4) Domain Scaling.}

Efficient evaluation of the softmax function under HE is challenging when the input values span a wide range, as polynomial approximations of the exponential function rapidly deteriorate outside a limited interval~\cite{Cheon2022efficient}.
To address this issue, it is crucial to control the domain of the exponential function during evaluation.

A representative approach is the normalize-and-square strategy proposed by~\cite{Cho2024fast}, which scales the input domain to a small interval and subsequently restores the original scale via repeated squaring (see Eq.~\eqref{eq:ccs_identity}).
While effective, applying this strategy to the standard softmax requires repeated division operations, which are particularly costly under HE.

In contrast, \cgfsoftmax naturally enables a much simpler \emph{domain scaling} mechanism.
\begin{proposition}
The domain scaling of CGF-softmax by a factor of $2^k$ ($k \in \mathbb{N}$) is expressed as:
    \begin{equation} \label{eq:scaling}
        \text{softmax}_{\text{CGF}}(\mathbf{x})_{i} = \exp \left(\frac{x_i - K_X(1) - \ln n}{2^k} \right)^{2^k}. 
    \end{equation}
\end{proposition}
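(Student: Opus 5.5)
The plan is to start from the single-exponential form in Eq.~\eqref{eq:cgf-softmax} and apply the elementary identity $\exp(a) = \bigl(\exp(a/m)\bigr)^{m}$ with $m = 2^k$. Here $a = x_i - K_X(1) - \ln n$ is a fixed real number for the given input $\mathbf{x}$, since $K_X(1)$ and $\ln n$ are constants determined by $\mathbf{x}$.

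The steps, in order:
\begin{enumerate}
\item Write $a = x_i - K_X(1) - \ln n$. This is real and finite because $M_X(1) = \frac{1}{n}\sum_j \exp(x_j) > 0$, so $K_X(1) = \ln M_X(1)$ is well defined.
\item Use the functional equation $\exp(u+v) = \exp(u)\exp(v)$. By induction on $k$, it gives $\exp(2b) = \exp(b)^2$ for every real $b$. Iterating $k$ times with $b = a/2^{j}$, $j = k, k-1, \ldots, 1$, yields $\exp(a) = \exp(a/2^k)^{2^k}$.
\item Substitute into Eq.~\eqref{eq:cgf-softmax} to obtain Eq.~\eqref{eq:scaling}.
\end{enumerate}

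The inductive presentation is worth making explicit. It mirrors the actual HE evaluation: one evaluates $\exp$ on the scaled argument $a/2^k$, which lies in a small interval, and then squares $k$ times. Each squaring is a single $\mathsf{CMult}$. Unlike Eq.~\eqref{eq:ccs_identity}, no renormalizing division is needed at any stage, because the normalization $K_X(1) + \ln n$ has already been absorbed into the exponent before scaling.

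There is essentially no obstacle; the identity is exact. The only care needed is to note that no branch or sign issue arises when taking the $2^k$-th power. The base $\exp(a/2^k)$ is strictly positive, and we only raise it to a positive integer power rather than taking roots. The equality is therefore unambiguous and holds componentwise for every $i$.
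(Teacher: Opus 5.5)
Your proof is correct and follows the paper's route: the paper simply applies the identity $\exp(a)=\bigl(\exp(a/2^k)\bigr)^{2^k}$ to the single-exponential form in Eq.~\eqref{eq:cgf-softmax}, and your iterated-squaring argument is an explicit version of that. One inessential slip is that $M_X(1)=\mathbb{E}[\exp(X)]$ is the ensemble mean, not the sample mean $\frac{1}{n}\sum_j \exp(x_j)$; if it were the sample mean, CGF-softmax would coincide with softmax and $\eta$ would vanish. This does not affect the argument, which only needs $0<M_X(1)<\infty$ so that the exponent is a finite real number.
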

The exponent is scaled by a factor of $1/2^k$ to ensure that the exponential function is evaluated over a significantly smaller domain.
The original value is then recovered by squaring the result $k$ times.
Crucially, this domain scaling reduces the effective approximation interval of the exponential function by a factor of $2^k$, while incurring only an additional multiplicative depth of $k+1$ ($\mathsf{CMult}$: $k$, $\mathsf{PMult}$: $1$).
Unlike the standard softmax, this procedure does not involve explicit normalization or division, enabling a substantially simpler and more efficient realization under HE.

\subsection{Error Analysis}


To evaluate the approximation accuracy of \cgfsoftmax, we analyze the \emph{relative error} as in \cite{Cho2024fast}.
\begin{proposition}
    The relative error between CGF-softmax and softmax is given by
    \begin{align}
        \eta & = \frac{\left\|\text{softmax}(\mathbf{x})-\text{softmax}_{\text{CGF}}(\mathbf{x})\right\|_\infty}{\left\|\text{softmax}(\mathbf{x})\right\|_\infty} = \left|1-\frac{\frac{1}{n}\sum_{i=1}^n \exp(x_i)}{\mathbb{E}[\exp(X)]}\right|. \label{eq:rel_err}
    \end{align}
\begin{proof}
The proof is given in Appendix~\ref{appendix:rel_error}. 
\end{proof}

\end{proposition}
Importantly, the relative error $\eta$ depends only on the ratio between the sample mean and the ensemble mean of $\exp(X)$, and is independent of the individual component index $i$.

To further quantify the concentration of the relative error, we apply the Berry--Esseen theorem~\cite{Berry1941accuracy,Esseen1942liapounoff}.
\begin{proposition}
Let $X_1,\ldots,X_n$ be i.i.d. random variables with the same distribution as $X$, set $Y_i=\exp(X_i)$, and define $\mu_Y=\mathbb{E}[Y_i]$, $\sigma_Y^2=\text{Var}(Y_i)>0$, and $\rho_Y=\mathbb{E}\left[|Y_i-\mu_Y|^3\right]<\infty$.
Then, for any $\delta>0$,
\begin{equation}
P_\eta \triangleq P(\eta \ge \delta) \le 2\left[ 1-\Phi\left( \frac{\delta\mu_Y\sqrt{n}}{\sigma_Y} \right) \right] + \frac{C\rho_Y}{\sigma_Y^3\sqrt{n}},
\label{eq:be_rel_err_tail}
\end{equation}
where $C>0$ is a constant and $\Phi(\cdot)$ is the cumulative distribution function (CDF) of the standard normal distribution.
\end{proposition}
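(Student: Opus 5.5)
We start from the relative-error expression of the preceding proposition, Eq.~\eqref{eq:rel_err}. With $Y_i=\exp(X_i)$ and $\bar Y_n=\frac1n\sum_{i=1}^n Y_i$, it reads
\begin{equation*}
\eta=\left|1-\frac{\bar Y_n}{\mu_Y}\right|=\frac{|\bar Y_n-\mu_Y|}{\mu_Y}.
\end{equation*}
This uses $\mu_Y>0$, which holds because $Y_i>0$ almost surely. The event $\{\eta\ge\delta\}$ is therefore exactly $\{|\bar Y_n-\mu_Y|\ge\delta\mu_Y\}$. We rewrite it in terms of the standardized sum
\begin{equation*}
Z_n=\frac{\sqrt n(\bar Y_n-\mu_Y)}{\sigma_Y},
\end{equation*}
which gives $\{\eta\ge\delta\}=\{|Z_n|\ge t\}$ with $t=\delta\mu_Y\sqrt n/\sigma_Y$.

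Next we split the two tails: $P(|Z_n|\ge t)=P(Z_n\ge t)+P(Z_n\le -t)$. Let $F_n$ be the CDF of $Z_n$. The Berry--Esseen theorem applies to the i.i.d.\ $Y_i$, since $\sigma_Y^2>0$ and $\rho_Y<\infty$ by hypothesis. It gives
\begin{equation*}
\sup_z|F_n(z)-\Phi(z)|\le \frac{C_0\rho_Y}{\sigma_Y^3\sqrt n}
\end{equation*}
for a universal constant $C_0$. For the left tail, $P(Z_n\le -t)=F_n(-t)\le \Phi(-t)+\frac{C_0\rho_Y}{\sigma_Y^3\sqrt n}$. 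For the right tail, $P(Z_n\ge t)=1-F_n(t^-)$, and the left limit $F_n(t^-)$ also obeys the uniform bound, since it is a limit of values $F_n(z)$ and $\Phi$ is continuous. Hence $P(Z_n\ge t)\le 1-\Phi(t)+\frac{C_0\rho_Y}{\sigma_Y^3\sqrt n}$.

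Adding the two tails and using the symmetry $\Phi(-t)=1-\Phi(t)$ gives
\begin{equation*}
P_\eta\le 2[1-\Phi(t)]+\frac{2C_0\rho_Y}{\sigma_Y^3\sqrt n}.
\end{equation*}
Setting $C=2C_0$ yields Eq.~\eqref{eq:be_rel_err_tail}.

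The only delicate point is the non-strict inequality $\ge$ in the right tail, which is handled by passing to left limits as above. Otherwise the argument is a direct translation of the relative-error identity into a two-sided deviation of the sample mean, so no substantial obstacle is expected.
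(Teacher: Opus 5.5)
Your proof is correct and follows the paper's argument essentially step for step. Both rewrite $\eta$ as $|\bar Y_n-\mu_Y|/\mu_Y$, standardize to $Z_n$, split $P(|Z_n|\ge a_\delta)$ into two tails, bound each tail by the Berry--Esseen uniform estimate, use $\Phi(-a)=1-\Phi(a)$, and absorb the factor $2$ into $C$. Your remarks on $\mu_Y>0$ and on handling $P(Z_n\ge t)=1-F_n(t^-)$ via left limits are small rigor points that the paper's proof passes over silently.
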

\begin{proof}
The proof is given in Appendix~\ref{appendix:proof_berry_esseen}.
\end{proof}
Since both the standard-normal tail term and the Berry--Esseen correction vanish as $n$ increases, $P_\eta\to 0$ for any $\delta>0$. 
Thus, the relative error $\eta$ asymptotically vanishes in probability.

\subsection{Practical Instantiation: Second-Order Cumulant Approximation}

While Eq.~\eqref{eq:cgf-softmax} expresses CGF-softmax in terms of the 
full CGF $K_X(1)$, practical evaluation under 
HE requires a tractable approximation. 
We adopt the second-order cumulant approximation, $K_X(1) \approx \mu + \sigma^2/2$. 
Here, $\mu$ and $\sigma^2$ correspond to the first two cumulants $\kappa_1$ and $\kappa_2$, while higher-order terms $\kappa_j/j!$ for $j \geq 3$ are omitted. 
This approximation requires estimating only $\mu$ and $\sigma^2$, which are efficiently 
computed under HE.
The omitted terms remain sufficiently small to have negligible impact on inference accuracy, supported by two complementary reasons. First, the factorial denominator $j!$ intrinsically suppresses higher-order contributions. Second, fine-tuning empirically suppresses the higher-order cumulants $\kappa_j$ themselves, as we demonstrate in Appendix~\ref{appx:higher_order}.

\section{Experimental Results} \label{sec:experimental}

\begin{table}[t]
    \centering
    \caption{Comparison with prior softmax approximation methods. 
    \cgfsoftmax achieves significantly lower multiplicative depth compared to existing baselines even for larger-scale models (e.g., LLaMA-3.2-1B) and more complex benchmarks.}
    \label{tab:depth_analysis}
    \resizebox{\linewidth}{!}{
    \begin{tabular}{llcccc}
        \toprule
        \textbf{Method} & \textbf{Model} & \textbf{Model Size} & \textbf{Dataset} & \textbf{\# Classes} & \textbf{Mult. Depth} \\
        \midrule
        \multirow{3}{*}{EncryptedLLM~\cite{DeCastro2025encryptedllm}} & GPT-2 Small & 124M & \multirow{3}{*}{\shortstack[l]{SST-2, WiC, PIQA, MNLI, ANLI, \\ Social IQA, HellaSwag, ARC (Easy) }} & \multirow{3}{*}{2, 3, 4} & 22 \\
                                      & GPT-2 Medium & 355M & & & 26 \\
                                      & GPT-2 Large & 774M & & & 30 \\
        \midrule
        THOR~\cite{Moon2025thor} & BERT-Base & 110M & SST-2, RTE, MRPC & 2 & 30 \\
        Tricycle~\cite{Lim2025tricycle} & BERT-Tiny & 4.4M & SST-2 & 2 & 17 \\
        
        ARION~\cite{Yang2025arion} & BERT-Tiny/Base & \shortstack{4.4M / 110M} & SST-2, RTE, QNLI & 2 & 14 \\
        \midrule
        \multirow{2}{*}{\textbf{Proposed}} & LLaMA-3.2-1B & 1B & \shortstack[l]{SST-2, Banking77, Clinc150} & \textbf{2, 77, 150} & \textbf{8--9} \\
        \cmidrule(l){2-6}
         & ViT/DeiT (Tiny/Base)& 5M / 86M & ImageNet-1k & \textbf{1,000} & \textbf{7--10} \\
        \bottomrule
    \end{tabular}
    }
\end{table}
We implemented our HE algorithms using \texttt{desilofhe}, a Python-based HE library developed by \cite{Desilo2025library}, which supports the CKKS scheme.
We set the slot size as $s = 2^{15}$, and the available multiplicative level after bootstrapping as $L=10$.
All experiments were conducted on dual AMD EPYC 7763 64-Core Processors (totaling 128 physical cores) and \SI{1}{\tera\byte} of RAM, operating on Ubuntu 20.04 LTS.

\subsection{Softmax Evaluation}

Table~\ref{tab:depth_analysis} presents a comparative analysis of the multiplicative depth required by CGF-softmax and prior softmax approximation methods~\cite{DeCastro2025encryptedllm,Lim2025tricycle,Yang2025arion}.
We note that prior approaches are optimized for fixed experimental settings tailored to specific model architectures and datasets.
As a result, a direct comparison under identical input intervals is inherently challenging due to differences in the evaluated models and datasets. 
Despite these discrepancies, the comparison in Table~\ref{tab:depth_analysis} highlights the \emph{scalability} and \emph{efficiency} of \cgfsoftmax. 
In particular, \cgfsoftmax achieves the lowest multiplicative depth, ranging from 7 to 10, even when applied to significantly larger-scale models (e.g., LLaMA-3.2-1B) and high-complexity benchmarks such as ImageNet-1k with 1,000 classes.
In contrast, prior methods require substantially higher multiplicative depths while being evaluated on smaller models and tasks with fewer classes.

Next, distinct from the softmax approximation methods discussed above, we evaluate the algorithmic complexity of CGF-softmax against~\citet{Cho2024fast}.
We adopt it as our primary softmax approximation baseline because, like our proposed method, it employs a \emph{domain scaling} strategy that enables a generalizable and scalable framework capable of handling varying input intervals, rather than relying on fixed bounds or dataset-specific tuning.
Table~\ref{tab:softmax_complexity} presents a detailed comparison.
To handle large input intervals $[-M, 0]$, \citet{Cho2024fast} mitigates numerical instability by scaling inputs by a factor of $1/2^k$.
However, recovering the original scale requires the \emph{normalize-and-square} algorithm (see Eq.~\eqref{eq:ccs_identity}), which involves an iterative process that is computationally expensive. 
In practice, this typically requires $k \approx \lceil \log_2{M}-\log_2(\ln{n})\rceil$ iterations, where $n$ denotes the input dimension of softmax.
In contrast, \cgfsoftmax streamlines this process by directly exploiting Eq.~\eqref{eq:scaling}.
This enables efficient handling of large input intervals while significantly reducing computational overhead. 
These advantages are reflected in the reduced multiplicative depth reported in Table~\ref{tab:softmax_complexity}.
Implementation details of \cgfsoftmax are provided in Appendix~\ref{appx:implementation}.

We also measure the execution time of softmax evaluation under HE. 
For this experiment, we set the input size to a $256 \times 256$ matrix with an input interval of $[-128,0]$, and all timings are measured using 8 CPU threads.
As shown in Table~\ref{tab:softmax_runtime}, the observed latency reduction is primarily attributed to the reduced multiplicative depth, which reduces or eliminates the need for expensive bootstrapping, as well as the reduced number of polynomial approximation steps, which significantly lowers the number of $\mathsf{CMult}$ operations.

Additionally, since \cgfsoftmax eliminates homomorphic division, it requires fewer homomorphic operations than the standard softmax, resulting in lower evaluation noise.
In our empirical analysis, \cgfsoftmax achieves a total noise of $3.05 \times 10^{-10}$ at multiplicative depth 7, which is lower than that of the standard softmax formulation.
This lower-noise behavior is particularly beneficial for applying it to deeper HE neural networks, where preserving the noise budget is critical.
A detailed empirical comparison of the noise behavior is provided in Appendix~\ref{appx:noise_analysis}.

\begin{table}[t]
    \centering
    \caption{Algorithmic complexity comparison between \cgfsoftmax and~\citet{Cho2024fast}, both based on domain scaling by $1/2^k$. 
    Here, $n$ denotes the input dimension of softmax, and $L$ represents the number of available multiplicative levels after bootstrapping.}
    \label{tab:softmax_complexity}
    \resizebox{0.6 \linewidth}{!}{
    \begin{tabular}{lcccc}
        \toprule
        \textbf{Method} & \textbf{Depth} & \textbf{\# $\mathsf{CMult}$} & \textbf{\# $\mathsf{Rot}$} & \textbf{\# $\mathsf{Boot}$} \\
        \midrule
        \citet{Cho2024fast} & $\ge 8k+9$ & $\ge 12k+58$ & $2k\log_2{n}$ & $\ge \lfloor (8k+9)/L \rfloor$ \\
        \textbf{Proposed} & $k + 6$ & $k + 10$ & $2\log_2{n}$ & $\lfloor (k+6)/L \rfloor$ \\
        \bottomrule
    \end{tabular}
    }
\end{table}

\begin{table}[t!]
    \centering
    \caption{Breakdown of CPU runtime (in seconds) for a single softmax evaluation on a $256 \times 256$ input matrix. 
    The reduction in total latency is primarily attributed to the elimination of bootstrapping ($\mathsf{Boot}$) and the reduced number of ciphertext multiplications ($\mathsf{CMult}$).
    In our setting, the average runtime per operation for $\mathsf{Add}$, $\mathsf{PMult}$, $\mathsf{CMult}$, $\mathsf{Rot}$, and $\mathsf{Boot}$ is approximately $0.0014$, $0.0021$, $0.089$, $0.059$, and $14$, respectively.
    Values are reported as mean $\pm$ standard deviation over 100 runs.}
    \label{tab:softmax_runtime}
    \resizebox{0.75 \linewidth}{!}{
    \begin{tabular}{lcccccc}
    \toprule
    \textbf{Method} & $\mathsf{Add}$ & $\mathsf{PMult}$ & $\mathsf{CMult}$ & $\mathsf{Rot}$ & $\mathsf{Boot}$ & \textbf{Total} \\
    \midrule
    \citet{Cho2024fast} 
    & $2.46 \pm 0.08$ 
    & $1.20 \pm 0.04$ 
    & $11.65 \pm 0.26$ 
    & $1.48 \pm 0.06$ 
    & $90.86 \pm 2.99$ 
    & $107.65 \pm 3.16$ \\
    \textbf{Proposed} 
    & $0.32 \pm 0.02$ 
    & $0.12 \pm 0.01$ 
    & $\mathbf{1.95 \pm 0.10}$ 
    & $0.83 \pm 0.06$ 
    & $\mathbf{0.00 \pm 0.00}$ 
    & $\mathbf{3.25 \pm 0.15}$ \\
    \bottomrule
\end{tabular}
    }
\end{table}

\begin{figure}[t]
    \centering
    \includegraphics[width=0.8\linewidth]{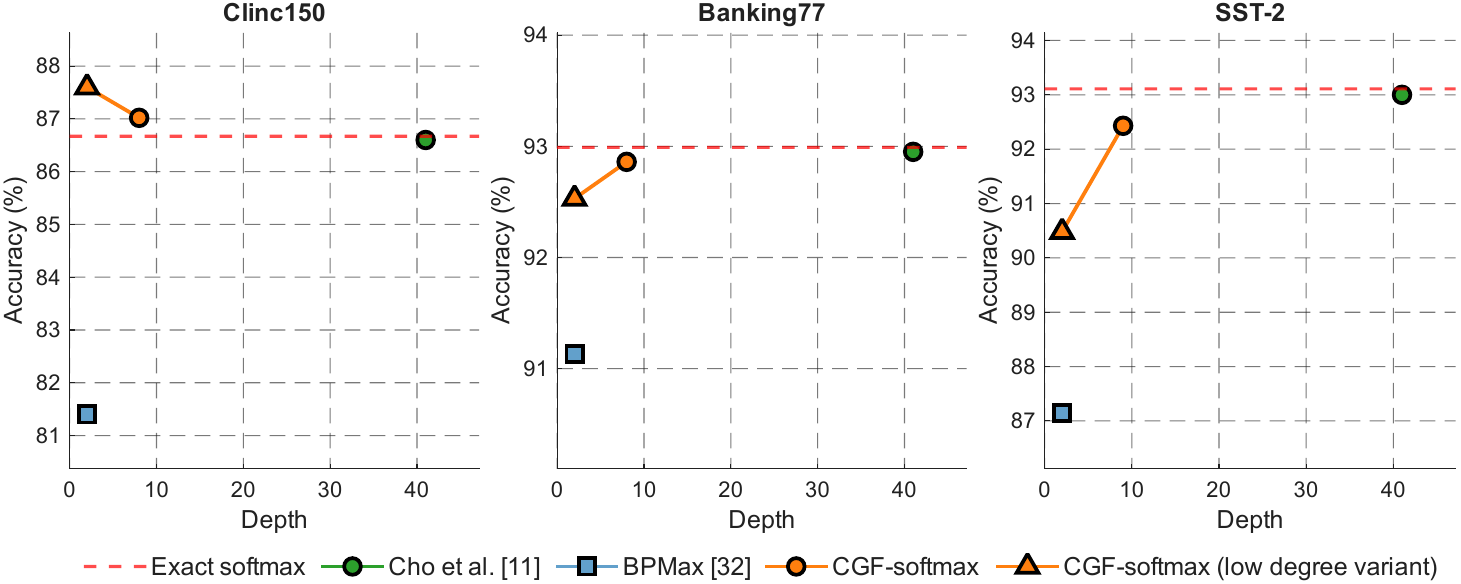}
    \caption{
    Cost–accuracy trade-off for LLaMA-3.2-1B across NLP benchmarks (Clinc150, Banking77, and SST-2). 
    CGF-softmax achieves near-plaintext accuracy with substantially lower multiplicative depth, thereby Pareto-dominating both \citet{Cho2024fast} (exact approximation) and BPMax~\cite{Park2025Powerformer} (low-depth replacement).
    }
    \label{fig:llama_depth}
\end{figure}

\begin{figure}[t]
    \centering
    \includegraphics[width=0.9\linewidth]{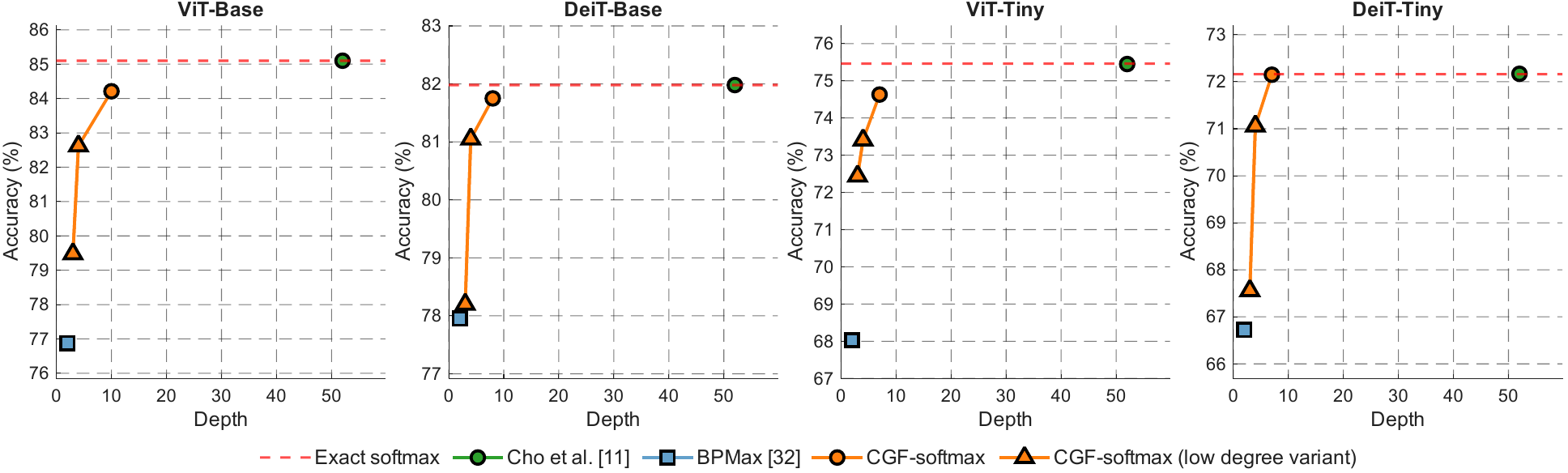}
    \caption{
    Cost–accuracy trade-off for ViT and DeiT models on the ImageNet-1k dataset. 
    CGF-softmax achieves near-plaintext accuracy with substantially lower multiplicative depth, thereby Pareto-dominating both \citet{Cho2024fast} (exact approximation) and BPMax~\cite{Park2025Powerformer} (low-depth replacement).
    }
    \label{fig:vit_depth}
\end{figure}
\subsection{Model Accuracy Evaluation}
\subsubsection{Experimental Setup}

\paragraph{Evaluation Scope \& Objectives.}

To validate the scalability of \cgfsoftmax to large-scale architectures and its ability to handle complex tasks with high-dimensional output spaces, we consider the following evaluation 
(1) LLaMA-3.2-1B, sourced from the Hugging Face \texttt{transformers} library~\cite{wolf2020transformers}, evaluated on Clinc150~\cite{Larson2019evaluation}, Banking77~\cite{Casanueva2020efficient}, and SST-2~\cite{Wang2018glue}, which involve $150$, $77$, and $2$ classes, respectively; and (2)
Vision Transformers (ViT/DeiT-Base/Tiny) implemented using the \texttt{timm} library~\cite{Wightman2019timm}, evaluated on ImageNet-1k~\cite{Dosovitskiy2021image} (1,000 classes).
\paragraph{Baselines.}

We select representative baselines from both the softmax approximation and replacement categories.
For \emph{softmax approximation}, we use~\citet{Cho2024fast} as the primary baseline.
Unlike prior approaches that are optimized for specific experimental setting, this method provides a generalized framework capable of handling arbitrary input intervals, making it suitable for evaluation across diverse benchmarks. 
For \emph{softmax replacement}, we use Batch Power-Max (BPMax) from Powerformer~\cite{Park2025Powerformer}, which eliminates division to achieve low circuit depth and was reported to incur negligible inference-accuracy degradation on BERT models.
\paragraph{Training Methodology.}

To recover the inference accuracy affected by homomorphic softmax evaluation, prior replacement baselines (BPMax) require an additional training phase.
For a fair comparison, we fine-tune \cgfsoftmax using the same training framework as BPMax.
Details of the training configurations and hyperparameters are provided in Appendix~\ref{appx:search_space}.

Notably, \cgfsoftmax converges faster during this adaptation process, leading to substantially lower training cost.
For LLaMA-3.2-1B, \cgfsoftmax reaches its best performance within $5$ epochs, whereas BPMax requires $20$ epochs.
For Vision Transformers, we further reduce the training cost by applying attention-only fine-tuning. We also present an \emph{alternative pipeline} that removes this additional training cost by incorporating \cgfsoftmax directly into the initial Supervised Fine-Tuning (SFT) phase.
Details are provided in Appendix~\ref{appx:alt_pipeline}.
\paragraph{Low-degree Variant.}
We further introduce a \emph{low-degree variant} of \cgfsoftmax, designed for scenarios that prioritize maximal efficiency with minimal multiplicative depth.
Unlike the standard setting, this variant modifies the approximation stage by replacing the exact exponential function with a low-degree polynomial during fine-tuning, rather than approximating a fine-tuned model at inference time. 
Details of the exponential approximation schemes are provided in Appendix~\ref{appx:exp_approx}.

\subsubsection{Inference Accuracy}
We analyze the inference accuracy results presented in Figures~\ref{fig:llama_depth} and~\ref{fig:vit_depth}.
In these figures, \emph{exact softmax} denotes the accuracy of the original pre-trained model using the standard softmax function in the plaintext domain.
The objective of HE-based softmax methods is to preserve the plaintext accuracy as closely as possible, while minimizing the required multiplicative depth.

Figures~\ref{fig:llama_depth} and~\ref{fig:vit_depth} collectively show the cost–accuracy Pareto frontier for homomorphic softmax evaluation. 
The two existing baselines occupy opposing extremes: \citet{Cho2024fast} attains near-exact accuracy but requires prohibitive multiplicative depths of up to $52$, while BPMax~\cite{Park2025Powerformer} achieves minimal depth of $2$ at the cost of substantial accuracy degradation. 
CGF-softmax Pareto-dominates both: it matches the accuracy of \citet{Cho2024fast} within \SI{1}{\percent} while requiring $4$–$6\times$ less depth, and it surpasses BPMax~\cite{Park2025Powerformer} by large accuracy margins at comparable or only slightly higher depth. 
Notably, even the low-degree variant of CGF-softmax--which operates at the same depth as BPMax~\cite{Park2025Powerformer} in the LLaMA setting--outperforms it across all benchmarks.
\section{Conclusion}
\label{sec:conclusion}
In this paper, we introduce CGF-softmax, a novel reformulation of the softmax function under HE. 
By leveraging the CGF, CGF-softmax reformulates the softmax denominator to eliminate computationally expensive operations such as maximum subtraction and homomorphic division. 
We provide a theoretical analysis establishing the key properties of CGF-softmax, including an asymptotic bound on its approximation error.
Extensive experiments on large-scale architectures, including LLaMA-3.2-1B and Vision Transformers, demonstrate that CGF-softmax preserves inference accuracy within \SI{1}{\%} of the plaintext baseline and consistently outperforms existing approaches on complex benchmarks.
By reconciling high accuracy with substantially reduced multiplicative depth, CGF-softmax offers a scalable and efficient solution for secure inference. 

\textbf{Limitations.} For relatively small models (e.g., BERT) on simple tasks (e.g., SST-2), existing softmax replacements (e.g., Power-softmax~\cite{Zimerman2024power}, BPMax~\cite{Park2025Powerformer}) already achieve satisfactory accuracy at low multiplicative depth, and thus may be sufficient in such settings.

{
\small
\bibliographystyle{abbrvnat}
\bibliography{abrv,mybib}

@STRING{IEEE_M_SP = "IEEE Security and Privacy"}

@STRING{IEEE_J_IFS = "IEEE Transactions on Information Forensics and Security"}

@STRING{SIAM_J_COMP = "SIAM Journal on Computing"}

@STRING{NAACL = "Proceedings of the Annual Conference of the North American Chapter of the Association for Computational Linguistics (NAACL)"}

@STRING{FnT_PRIV = "Foundations and Trends in Privacy and Security"}

@STRING{ARXIV = "arXiv preprint"}

@STRING{NIPS = "Advances in Neural Information Processing Systems (NeurIPS)"}

@STRING{CCS = "Proceedings of the ACM SIGSAC Conference on Computer and Communications Security (CCS)"}

@STRING{ASIACRYPT = "Proceedings of the International Conference on the Theory and Application of Cryptology and Information Security (ASIACRYPT)"}

@STRING{ICML = "Proceedings of the International Conference on Machine Learning (ICML)"}

@STRING{ACL  = "Proceedings of the Annual Meeting of the Association for Computational Linguistics (ACL)"}

@STRING{IACR_EPRINT = "Cryptology ePrint Archive"}

@STRING{EMNLP = "Proceedings of the Conference on Empirical Methods in Natural Language Processing (EMNLP)"}

@STRING{FINDINGS_ACL = "Findings of the Association for Computational Linguistics (Findings of ACL)"}

@STRING{CVPR = "Proceedings of the IEEE/CVF Conference on Computer Vision and Pattern Recognition (CVPR)"}

@STRING{ICLR = "Proceedings of the International Conference on Learning Representations (ICLR)"}

@STRING{STOC = "Proceedings of the ACM Symposium on Theory of Computing (STOC)"}

@ARTICLE{Al-Rubaie2019Privacy,
  title   = "Privacy-Preserving Machine Learning: Threats and Solutions",
  author  = "Al-Rubaie, M and Chang, J M",
  journal = IEEE_M_SP,
  volume  =  17,
  number  =  2,
  pages   = "49--58",
  year    =  2019
}

@article{Berry1941Accuracy,
  title={{The accuracy of the Gaussian approximation to the sum of independent variates.}},
  author={Berry, Andrew C},
  journal={Transactions of the American Mathematical Society},
  volume={49},
  number={1},
  pages={122--136},
  year={1941},
  publisher={JSTOR}
}

@inproceedings{Brown2020language,
author = {Tom Brown and Benjamin Mann and Nick Ryder and Melanie Subbiah and Jared D. Kaplan and Prafulla Dhariwal and Arvind Neelakantan and Pranav Shyam and Girish Sastry and Amanda Askell and others},
booktitle = NIPS,
title = {{Language models are few-shot learners}},
volume = {33},
pages = {1877--1901},
month = dec,
year = {2020}
}

@inproceedings{Casanueva2020efficient,
author = {I{\~n}igo Casanueva and Tadas Tem{\v{c}}inas and Daniela Gerz and Matthew Henderson and Ivan Vuli{\'{c}}},
booktitle = {Proceedings of the 2nd Workshop on NLP for Conversational AI},
pages = {38--45},
month = jul,
title = {{Efficient intent detection with dual sentence encoders}},
year = {2020}
}

@BOOK{Casella2024Statistical,
  title     = {{Statistical inference}},
  author    = "Casella, George and Berger, Roger",
  publisher = "Chapman and Hall/CRC",
  address   = "Boca Raton",
  month     =  apr,
  year      =  2024,
  language  = "en"
}

@inproceedings{Chen2022the,
author = {Tianyu Chen and Hangbo Bao and Shaohan Huang and Li Dong and Binxing Jiao and Daxin Jiang and Haoyi Zhou and Jianxin Li and Furu Wei},
booktitle = FINDINGS_ACL,
pages = {3510--3520},
month = may,
title = {{THE-X: Privacy-preserving transformer inference with homomorphic encryption}},
year = {2022}
}

@book{Cheney1966introduction,
author = {Elliott Ward Cheney},
address = {New York},
publisher = {McGraw-Hill},
title = {{Introduction to approximation theory}},
year = {1966}
}

@inproceedings{Cheon2017homomorphic,
author = {Jung Hee Cheon and Andrey Kim and Miran Kim and Yongsoo Song},
booktitle = ASIACRYPT,
pages = {409--437},
month = dec,
title = {{Homomorphic encryption for arithmetic of approximate numbers}},
year = {2017}
}

@article{Cheon2022efficient,
  title={Efficient homomorphic evaluation on large intervals},
  author={Cheon, Jung Hee and Kim, Wootae and Park, Jai Hyun},
  journal=IEEE_J_IFS,
  volume={17},
  pages={2553--2568},
  year={2022},
  publisher={IEEE}
}

@inproceedings{Cho2024fast,
author = {Wonhee Cho and Guillaume Hanrot and Taeseong Kim and Minje Park and Damien Stehl{\'e}},
booktitle = CCS,
pages = {4391--4404},
month = oct,
title = {{Fast and accurate homomorphic softmax evaluation}},
year = {2024}
}

@inproceedings{DeCastro2025encryptedllm,
  author = {Leo De Castro and Daniel Escudero and Adya Agrawal and Antigoni Polychroniadou and Manuela Veloso},
  booktitle = ICML,
  title = {{EncryptedLLM: Privacy-preserving large language model inference via GPU-accelerated fully homomorphic encryption}},
  volume = {267},
  pages = {12677--12688},
  month = jul,
  year = {2025}
}

@inproceedings{Deng2009imagenet,
  author = {Jia Deng and Wei Dong and Richard Socher and Li-Jia Li and Kai Li and Li Fei-Fei},
  booktitle = CVPR,
  title = {{ImageNet: A large-scale hierarchical image database}},
  pages = {248--255},
  month = jun,
  year = {2009},
  doi = {10.1109/CVPR.2009.5206848}
}

@misc{Desilo2025library,
author = {{DESILO Inc.}},
title = {{DESILO FHE library}},
url = {https://desilo.ai},
year = {2025}
}

@inproceedings{Devlin2019bert,
author = {Jacob Devlin and Ming-Wei Chang and Kenton Lee and Kristina Toutanova},
booktitle = NAACL,
title = {{BERT}: Pre-training of deep bidirectional transformers for language understanding},
pages = {4171--4186},
month = jun,
year = {2019}
}

@inproceedings{Dosovitskiy2021image,
author = {Alexey Dosovitskiy and Lucas Beyer and Alexander Kolesnikov and Dirk Weissenborn and Xiaohua Zhai and Thomas Unterthiner and Mostafa Dehghani and Matthias Minderer and Georg Heigold and Sylvain Gelly and Jakob Uszkoreit and Neil Houlsby},
booktitle = ICLR,
title = {{An image is worth 16x16 words: Transformers for image recognition at scale}},
month = may,
year = {2021}
}

@article{Dubey2024llama,
author = {Abhimanyu Dubey and Abhinav Jauhri and Abhinav Pandey and Abhishek Kadian and Ahmad Al-Dahle and Aiesha Letman and Akhil Mathur and Alan Schelten and Alex Vaughan and Amy Yang and others},
journal = ARXIV,
title = {{The Llama 3 herd of models}},
volume = {arXiv:2407.21783},
month = jul,
year = {2024}
}

@book{Esseen1942liapounoff,
  title={{On the Liapounoff limit of error in the theory of probability}},
  author={Esseen, C.G.},
  series={Arkiv f{\"o}r matematik, astronomi och fysik},
  year={1942},
  publisher={Almqvist \& Wiksell}
}

@article{Evans2018pragmatic,
author = {David Evans and Vladimir Kolesnikov and Mike Rosulek},
journal = FnT_PRIV,
month = dec,
number = {2-3},
pages = {70--246},
title = {{A pragmatic introduction to secure multi-party computation}},
volume = {2},
year = {2018}
}

@inproceedings{Gentry2009fully,
author = {Craig Gentry},
booktitle = STOC,
pages = {169--178},
title = {{Fully homomorphic encryption using ideal lattices}},
month = may,
year = {2009}
}

@mastersthesis{Goldschmidt1964applications,
author = {Robert E. Goldschmidt},
school = {Massachusetts Institute of Technology},
title = {{Applications of division by convergence}},
month = jun,
year = {1964}
}

@ARTICLE{Hong2022Secure,
  title    = "Secure tumor classification by shallow neural network using
              homomorphic encryption",
  author   = "Hong, Seungwan and Park, Jai Hyun and Cho, Wonhee and Choe,
              Hyeongmin and Cheon, Jung Hee",
  journal  = "BMC Genomics",
  volume   =  23,
  number   =  1,
  pages    =  284,
  month    =  apr,
  year     =  2022
}

@inproceedings{Hu2022lora,
author = {Edward J. Hu and Yelong Shen and Phillip Wallis and Zeyuan Allen-Zhu and Yuanzhi Li and Shean Wang and Lu Wang and Weizhu Chen},
booktitle = ICLR,
title = {{{LoRA}: Low-rank adaptation of large language models}},
month = apr,
year = {2022}
}

@inproceedings{Larson2019evaluation,
author = {Stefan Larson and Anish Mahendran and Joseph J. Peper and Christopher Clarke and Andrew Lee and Parker Hill and Jonathan K. Kummerfeld and Kevin Leach and Michael A. Laurenzano and Lingjia Tang and Jason Mars},
booktitle = EMNLP,
pages = {1311--1316},
month = nov,
title = {{An evaluation dataset for intent classification and out-of-scope prediction}},
year = {2019}
}

@inproceedings{Lee2023hetal,
author = {Seewoo Lee and Garam Lee and Jung Woo Kim and Junbum Shin and Mun-Kyu Lee},
booktitle = ICML,
pages = {19010--19035},
title = {{HETAL: Efficient privacy-preserving transfer learning with homomorphic encryption}},
volume = {202},
month = jul,
year = {2023}
}

@inproceedings{Li2023mpcformer,
author = {Dacheng Li and Hongyi Wang and Rulin Shao and Han Guo and Eric Xing and Hao Zhang},
booktitle = ICLR,
title = {{MPCFormer: Fast, performant and private transformer inference with MPC}},
month = feb,
year = {2023}
}

@article{Lim2025tricycle,
author = {Lawrence Lim and Vikas Kalagi and Divyakant Agrawal and Amr El Abbadi},
journal = IACR_EPRINT,
title = {{Tricycle: Private transformer inference with tricyclic encodings}},
month = jun,
year = {2025}
}

@inproceedings{Lou2020safenet,
author = {Qian Lou and Yilin Shen and Hongxia Jin and Lei Jiang},
booktitle = ICLR,
title = {{SafeNet: A secure, accurate and fast neural network inference}},
month = jan,
year = {2020}
}

@inproceedings{Lou2021hemet,
author = {Qian Lou and Lei Jiang},
booktitle = ICML,
pages = {7102--7110},
title = {{HEMET}: A homomorphic-encryption-friendly privacy-preserving mobile neural network architecture},
month = jul,
year = {2021}
}

@inproceedings{Luo2024secformer,
author = {Jinglong Luo and Yehong Zhang and Zhuo Zhang and Jiaqi Zhang and Xin Mu and Hui Wang and Yue Yu and Zenglin Xu},
booktitle = FINDINGS_ACL,
pages = {13333--13348},
month = aug,
title = {{SecFormer: Fast and accurate privacy-preserving inference for transformer models via SMPC}},
year = {2024}
}

@misc{Meta2024meta,
  author = {{Meta}},
  title = {{meta-llama/Llama-3.2-1B: Model card}},
  url = {https://huggingface.co/meta-llama/Llama-3.2-1B},
  year = {2024}
}

@inproceedings{Moon2025thor,
author = {Jungho Moon and Dongwoo Yoo and Xiaoqian Jiang and Miran Kim},
booktitle = CCS,
pages = {3765--3779},
month = oct,
title = {{THOR: Secure transformer inference with homomorphic encryption}},
year = {2025}
}

@inproceedings{Park2025Powerformer,
author = {Dongjin Park and Eunsang Lee and Joon-Woo Lee},
booktitle = ACL,
pages = {11090--11111},
month = jul,
title = {{Powerformer: Efficient and high-accuracy privacy-preserving language model with homomorphic encryption}},
year = {2025}
}

@article{Paterson1973on,
author = {Mike Paterson and Larry J. Stockmeyer},
journal = SIAM_J_COMP,
month = mar,
number = {1},
pages = {60--66},
title = {{On the number of nonscalar multiplications necessary to evaluate polynomials}},
volume = {2},
year = {1973}
}

@inproceedings{Rho2025encryption,
author = {Donghwan Rho and Taeseong Kim and Minje Park and Jung Woo Kim and Hyunsik Chae and Ernest K. Ryu and Jung Hee Cheon},
booktitle = ICLR,
title = {{Encryption-friendly {LLM} architecture}},
month = jan,
year = {2025}
}

@article{Riazi2019deep,
title     = "Deep learning on private data",
author    = "Riazi, M Sadegh and Darvish Rouani, Bita and Koushanfar, Farinaz",
journal   = IEEE_M_SP,
publisher = "Institute of Electrical and Electronics Engineers (IEEE)",
volume    =  17,
number    =  6,
pages     = "54--63",
month     =  nov,
year      =  2019
}

@ARTICLE{Rivest1978On,
title   = "On data banks and privacy homomorphisms",
author  = "Rivest, R L and Adleman, L and Dertouzos, M L",
journal = {Foundations of Secure Computation},
volume  =  4,
number  =  11,
pages   = "169--180",
year    =  1978
}

@inproceedings{Touvron2021training,
  author = {Hugo Touvron and Matthieu Cord and Matthijs Douze and Francisco Massa and Alexandre Sablayrolles and Herv{\'e} J{\'e}gou},
  booktitle = ICML,
  title = {{Training data-efficient image transformers \& distillation through attention}},
  volume = {139},
  pages = {10347--10357},
  month = jul,
  year = {2021}
}

@inproceedings{Vaswani2017attention,
author = {Ashish Vaswani and Noam Shazeer and Niki Parmar and Jakob Uszkoreit and Llion Jones and Aidan N. Gomez and {\L}ukasz Kaiser and Illia Polosukhin},
booktitle = NIPS,
pages = {5998--6008},
title = {{Attention is all you need}},
month = dec,
year = {2017}
}

@inproceedings{Wang2018glue,
author = {Alex Wang and Amanpreet Singh and Julian Michael and Felix Hill and Omer Levy and Samuel R. Bowman},
booktitle = {Proceedings of the EMNLP Workshop BlackboxNLP},
title = {{GLUE: A multi-task benchmark and analysis platform for natural language understanding}},
pages = {353--355},
month = nov,
year = {2018}
}

@misc{Wightman2019timm,
author = {Ross Wightman},
title = {{timm: PyTorch image models}},
url = {https://github.com/rwightman/pytorch-image-models},
year = {2019}
}

@inproceedings{Wolf2020transformers,
author = {Thomas Wolf and Lysandre Debut and Victor Sanh and Julien Chaumond and Clement Delangue and Anthony Moi and Pierric Cistac and Tim Rault and Rémi Louf and Morgan Funtowicz and Jamie Brew and Hugues Bennett},
booktitle = {Proceedings of the EMNLP: System Demonstrations},
title = {{Transformers: State-of-the-art natural language processing}},
pages = {38--45},
month = nov,
year = {2020}
}

@article{Yang2025arion,
author = {Linhan Yang and Jingwei Chen and Wangchen Dai and Shuai Wang and Wenyuan Wu and Yong Feng},
journal = IACR_EPRINT,
title = {{ARION: Attention-optimized transformer inference on encrypted data}},
month = dec,
year = {2025}
}

@inproceedings{Zimerman2024converting,
  author = {Itamar Zimerman and Moran Baruch and Nir Drucker and Gilad Ezov and Omri Soceanu and Lior Wolf},
  booktitle = ICML,
  title = {{Converting transformers to polynomial form for secure inference over homomorphic encryption}},
  volume = {235},
  pages = {62803--62814},
  month = jul,
  year = {2024}
}

@article{Zimerman2024power,
author = {Itamar Zimerman and Allon Adir and Ehud Aharoni and Matan Avitan and Moran Baruch and Nir Drucker and Jenny Lerner and Ramy Masalha and Reut Meiri and Omri Soceanu},
journal = ARXIV,
title = {{Power-softmax: Towards secure LLM inference over encrypted data}},
volume = {arXiv:2410.09457},
month = oct,
year = {2024}
}
}


\appendix

\section{Shift Invariance of CGF-softmax}
\label{appendix:cgf_shiftinvariant}

For any constant $c$,
\begin{align}
\text{softmax}_{\text{CGF}}(\mathbf{x}-c)_i 
& = \frac{\exp(x_i - c)}{n \mathbb{E}[\exp(X-c)]} 
= \frac{\exp(x_i)}{n \mathbb{E}[\exp(X)]} = \text{softmax}_{\text{CGF}}(\mathbf{x})_i, \nonumber
\end{align}
which holds for all $i \in \{1, \ldots, n\}$.
Thus, CGF-softmax is shift-invariant.

\section{Relative Error} \label{appendix:rel_error}

The relative error $\eta$ is given by 
\begin{align}
\eta & =\frac{\left \| \text{softmax}(\mathbf{x}) - \text{softmax}_{\text{CGF}}(\mathbf{x}) \right \|_\infty}{\left \| \text{softmax}(\mathbf{x}) \right \|_\infty} \\
& = \frac{\left | \frac{1}{\sum_{i=1}^n \exp(x_i)} - \frac{1}{n M_X (1)} \right | \cdot \left \| \exp(\mathbf{x}) \right  \|_\infty}{\left | \frac{1}{\sum_{i=1}^n \exp(x_i)} \right |  \cdot \left \| \exp(\mathbf{x}) \right \|_\infty} \\
& = \left | 1 - \frac{\frac{1}{n} \sum_{i=1}^n \exp(x_i)}{ \mathbb{E}[\exp(X)]} \right |.
\end{align}
Note that the relative error depends only on the ratio between the sample mean and the ensemble mean. 

\section{Error Analysis using Berry--Esseen Theorem}
\label{appendix:proof_berry_esseen}
Let $\bar{Y}_n = \frac{1}{n}\sum_{i=1}^{n}Y_i$, where $Y_i=\exp(X_i)$.
By Eq.~\eqref{eq:rel_err}, the relative error can be written as
\begin{equation}
\eta = \left| 1-\frac{\bar{Y}_n}{\mu_Y} \right| = \frac{|\bar{Y}_n-\mu_Y|}{\mu_Y}.
\end{equation}
Define the standardized sample mean
\begin{equation}
Z_n = \frac{\sqrt{n}(\bar{Y}_n-\mu_Y)}{\sigma_Y}.
\end{equation}
Then, for any $\delta>0$,
\begin{equation}
\eta\ge\delta \quad\Longleftrightarrow\quad |Z_n| \ge \frac{\delta\mu_Y\sqrt{n}}{\sigma_Y}.
\end{equation}
Let
\begin{equation}
a_\delta = \frac{\delta\mu_Y\sqrt{n}}{\sigma_Y}.
\end{equation}
Then
\begin{equation}
P_\eta \triangleq P(\eta \ge \delta) = P(|Z_n|\ge a_\delta).
\end{equation}

By the Berry--Esseen theorem~\cite{Berry1941accuracy,Esseen1942liapounoff},
\begin{equation}
\sup_{z\in\mathbb{R}} \left| P(Z_n\le z)-\Phi(z) \right| \le \frac{C_{\text{BE}} \cdot \rho_Y}{\sigma_Y^3\sqrt{n}}.
\end{equation}
Therefore,
\begin{align}
P(|Z_n|\ge a_\delta)
&=P(Z_n\le -a_\delta) + P(Z_n\ge a_\delta) \\
&\le \Phi(-a_\delta) + \left[1-\Phi(a_\delta)\right] + 2 \cdot \frac{C_{\text{BE}} \cdot \rho_Y}{\sigma_Y^3\sqrt{n}}.
\end{align}
Since $\Phi(-a_\delta)=1-\Phi(a_\delta)$, we obtain
\begin{equation}
P_\eta \le 2\left[ 1-\Phi(a_\delta) \right] + \frac{2C_\text{BE} \cdot \rho_Y}{\sigma_Y^3\sqrt{n}}.
\end{equation}
Absorbing the factor $2$ into the universal constant $C ~(=2C_\text{BE})$ and substituting the definition of $a_\delta$ gives
\begin{equation}
P_\eta \le 2\left[ 1-\Phi\left( \frac{\delta\mu_Y\sqrt{n}}{\sigma_Y} \right) \right] + \frac{C \rho_Y}{\sigma_Y^3\sqrt{n}},
\end{equation}
which proves the claim.

\section{Empirical Behavior of Higher-order Terms in the CGF}
\label{appx:higher_order}

This section examines the effect of fine-tuning on the higher-order terms in the CGF.
Recall that
\begin{equation}
K_X(1)
=
\kappa_1 + \frac{\kappa_2}{2!}
+
\sum_{j=3}^{\infty} \frac{\kappa_j}{j!}.
\end{equation}
In \cgfsoftmax, we use the second-order cumulant approximation,
$K_X(1) \approx \kappa_1 + \kappa_2/2$, and omit the higher-order terms $\kappa_j/j!$ for $j \geq 3$.

Because the factorial denominator $j!$ suppresses higher-order terms as $j$ increases, contributions from order five and above are relatively small.
We therefore focus on the dominant omitted contributions, namely $\kappa_3/3!$ and $\kappa_4/4!$.

\begin{figure}[b]
    \centering
    \includegraphics[width=0.99\linewidth]{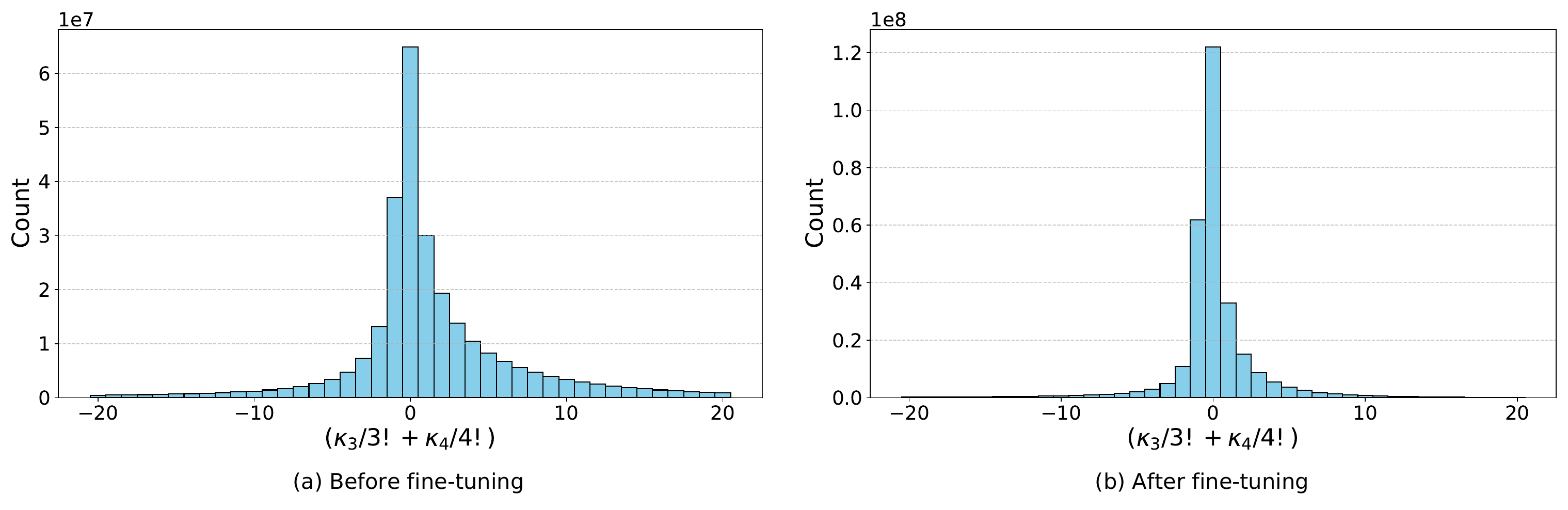}
    \caption{Distribution of $\kappa_3/3! + \kappa_4/4!$ for softmax inputs of DeiT-Base on ImageNet-1k, before and after fine-tuning.}
    \label{fig:kappa3_4}
\end{figure}

Figure~\ref{fig:kappa3_4} reports the distribution of $\kappa_3/3! + \kappa_4/4!$ computed from DeiT-Base softmax inputs on ImageNet-1k, before and after fine-tuning.
After fine-tuning, the distribution becomes more concentrated around zero, with its standard deviation reduced from $6.81$ to $3.85$.
This indicates that fine-tuning suppresses the dominant higher-order terms.

Together with the factorial attenuation inherent in the CGF, this empirical reduction supports the second-order cumulant approximation used in \cgfsoftmax and explains why the omitted higher-order terms have negligible impact on model performance in practice.

\section{Implementation Details} \label{appx:implementation}

\subsection{Packing Strategy} \label{appx:packing}
In Transformer architectures, the softmax function is predominantly applied within the self-attention mechanism, necessitating row-wise execution across the input matrix.
Accordingly, we target the row-wise application of CGF-softmax on an input matrix $A \in \mathbb{R}^{N_1 \times N_2}$.
We assume $N_1$ and $N_2$ are powers of two; for arbitrary dimensions, we apply zero-padding to extend them to the nearest power of two.
The input data is encrypted into $t = \lceil (N_1 \cdot N_2) / s \rceil$ ciphertexts.
To facilitate efficient column-wise aggregation, we employ a structure similar to the column-major packing variant used by Cho et al.~\cite{Cho2024fast}.
Specifically, elements within the same row are spaced by a fixed gap $g$, defined as $g = (t \cdot s) / N_2$.

Let $ct_i[j]$ denote the value stored in the $j$-th slot of the $i$-th ciphertext.
The element $A[n_1,n_2]$ located at row $n_1$ ($0 \le n_1 < N_1$) and column $n_2$ ($0 \le n_2 < N_2$) is mapped to the ciphertext as follows:
\begin{equation}
    A[n_1,n_2] = ct_i[j], \quad \text{where } 
    \begin{cases} 
        i = \lfloor (n_2 \cdot g) / s \rfloor, \\ 
        j = n_1 + (n_2 \cdot g \bmod s). 
    \end{cases}
    \label{eq:packing_map}
\end{equation}
Any slots in the ciphertexts that are not populated by matrix elements according to this mapping are initialized to zero.

\subsection{Algorithm}

Based on the packing strategy described in Appendix~\ref{appx:packing}, we present the outline of homomorphic CGF-softmax in Algorithm~\ref{alg:cgf_softmax}.
The input consists of a set of ciphertexts $\mathcal{V} = \{ct_0, \dots, ct_{t-1}\}$ representing the encrypted matrix.
The algorithm is designed to evaluate CGF-softmax in a row-wise manner.
Notably, the entire pre-processing stage (Steps 1 and 2) consumes 1 multiplicative level.
This is because the squaring operation necessitates a single ciphertext--ciphertext multiplication.
Furthermore, to optimize the circuit depth, consecutive scalar multiplications are pre-computed into a single scalar and applied to the ciphertext once; this is a common optimization technique in HE to prevent unnecessary level consumption.
For the exponential function, we employ a polynomial approximation denoted as $\mathsf{AExp}$, the details of which are outlined in Appendix~\ref{appx:exp_approx}.

\begin{algorithm}[H]
   \caption{Homomorphic CGF-softmax Outline}
   \label{alg:cgf_softmax}
   \centering
   
\begin{algorithmic}
   \STATE {\bfseries Input:} A set of ciphertexts $\mathcal{V} = \{ct_0, \dots, ct_{t-1}\}$ encrypting the matrix.
   \STATE {\bfseries Output:} A set of ciphertexts $\mathcal{W}= \{ct'_0, \dots, ct'_{t-1}\}$ encrypting the result of applying CGF-softmax row-wise to the matrix.

   \STATE \COMMENT{\textbf{// Step 1: Compute Mean ($\mu$)}}
   \STATE $ct_{\text{sum}} \leftarrow$ Sum all ciphertexts in $\mathcal{V}$ using addition.
   \STATE $\mu \leftarrow$ Perform row-wise summation on $ct_{\text{sum}}$ by repeating addition and rotation.

   \STATE \COMMENT{\textbf{// Step 2: Compute Variance Term ($\frac{1}{2}\sigma^2$)}}
   \STATE $\mathcal{V}_{\text{centered}} \leftarrow$ Subtract $\mu$ from each ciphertext in $\mathcal{V}$.
   \STATE $\mathcal{V}_{\text{sq}} \leftarrow$ Square each ciphertext in $\mathcal{V}_{\text{centered}}$.
   \STATE $ct_{\text{sum}} \leftarrow$ Sum all ciphertexts in $\mathcal{V}_{\text{sq}}$.
   \STATE $\text{var\_term} \leftarrow$ Perform row-wise summation on $ct_{\text{sum}}$.

   \STATE \COMMENT{\textbf{// Step 3: Final Approximation}}
   \STATE $\mathcal{V}_{\text{norm}} \leftarrow$ Subtract $\text{var\_term}$ and $\ln n$ from each ciphertext in $\mathcal{V}_{\text{centered}}$.
   \STATE $\mathcal{W} \leftarrow$ Apply $\mathsf{AExp}$ to each ciphertext in $\mathcal{V}_{\text{norm}}$.
   
   \STATE \textbf{return} $\mathcal{W}$
\end{algorithmic}
\end{algorithm}

\section{Noise Analysis under Homomorphic Evaluation}
\label{appx:noise_analysis}
In CKKS-based homomorphic evaluation, evaluation noise accumulates through arithmetic operations such as ciphertext multiplication, rescaling, and bootstrapping.
The amount of accumulated noise therefore depends on both the multiplicative depth and the number of homomorphic operations in the circuit.
From this perspective, homomorphic division is particularly costly, since it is implemented through an iterative approximation procedure involving additional multiplications and rescalings.
Thus, eliminating division directly reduces the main sources of noise accumulation in the softmax circuit.

To empirically show this, we compare the total noise induced when the standard softmax and \cgfsoftmax are evaluated homomorphically.
For an input vector $\mathbf{x}$, we denote the plaintext evaluations by
$\text{softmax}(\mathbf{x})$ and $\text{softmax}_{\text{CGF}}(\mathbf{x})$,
and their homomorphic evaluations by
$\text{softmax}_{\text{FHE}}(\mathbf{x})$ and
$\text{softmax}_{\text{CGF,FHE}}(\mathbf{x})$, respectively.

We measure the total noise by the $\ell_{\infty}$ error between the plaintext and homomorphic evaluation results:
\begin{equation}
\left\|
\text{softmax}(\mathbf{x}) -
\text{softmax}_{\text{FHE}}(\mathbf{x})
\right\|_{\infty},
\quad
\left\|
\text{softmax}_{\text{CGF}}(\mathbf{x}) -
\text{softmax}_{\text{CGF,FHE}}(\mathbf{x})
\right\|_{\infty}.
\end{equation}

For a fair comparison, both methods use the same degree-15 Chebyshev polynomial approximation for the exponential function, following the standard configuration in Appendix~\ref{appx:exp_approx}.
For \cgfsoftmax, the computation requires one depth for estimating the mean and variance before exponentiation,  four depths for the degree-15 exponential approximation, and two additional depths for the domain scaling (see Eq.~\eqref{eq:scaling}).
Therefore, the homomorphic evaluation of \cgfsoftmax is completed at multiplicative depth 7 without requiring homomorphic division.

The standard softmax uses the same exponential approximation, which accounts for multiplicative depth 6.
Unlike \cgfsoftmax, however, the standard softmax additionally requires homomorphic division for normalization.
While various methods exist for implementing division under HE, we adopt Goldschmidt's iterative algorithm~\cite{Goldschmidt1964applications}, since it is depth-efficient and requires relatively fewer homomorphic operations, thus introducing less encryption noise.
Each Goldschmidt iteration consumes one additional multiplicative depth and reduces the division error.

\begin{figure}[h]
    \centering
    \includegraphics[width=0.70\linewidth]{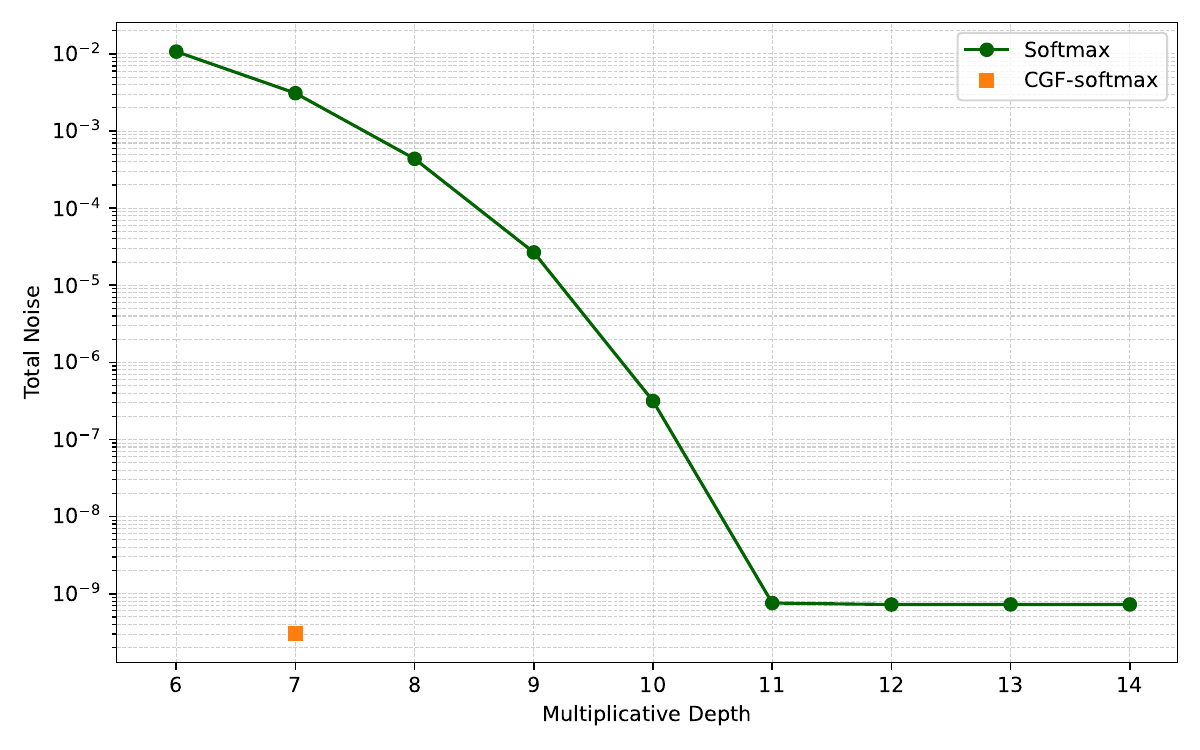}
    \caption{
    Total noise of the standard softmax and \cgfsoftmax under homomorphic evaluation, measured by the $\ell_{\infty}$ error between plaintext and homomorphic outputs.
    The $x$-axis denotes the overall multiplicative depth.
    For \cgfsoftmax, the computation is completed at depth 7 without homomorphic division, achieving a total noise of $3.05 \times 10^{-10}$.
    The standard softmax uses the same degree-15 exponential approximation and starts from depth 6, but additionally requires Goldschmidt's iterative division algorithm for normalization, where each iteration consumes one more multiplicative depth.
    Although increasing the number of division iterations reduces the total noise of the standard softmax, it remains noisier than \cgfsoftmax even at larger depths.
    }
    \label{fig:relative_error}
\end{figure}

Figure~\ref{fig:relative_error} reports the total noise as a function of the multiplicative depth.
For the standard softmax, increasing the number of Goldschmidt iterations reduces the total noise, but this improvement comes at the cost of additional multiplicative depth.
Moreover, even with larger depth, the total noise of the standard softmax remains higher than that of \cgfsoftmax.
In contrast, \cgfsoftmax avoids homomorphic division and achieves substantially lower total noise of $3.05 \times 10^{-10}$ at depth 7.
This error level is also small in absolute terms: it is about two orders of magnitude below the FP32 machine epsilon $(2^{-23} \approx 1.19 \times 10^{-7})$, and far below the precision levels of FP16 $(2^{-10} \approx 9.77 \times 10^{-4})$ and BF16 $(2^{-7} = 7.81 \times 10^{-3})$ arithmetic commonly used in neural network inference.
These results demonstrate that removing homomorphic division substantially reduces the noise introduced by softmax evaluation under HE, providing a clear advantage for noise-budget management in deeper HE neural networks.

\section{Approximation of the Exponential Function} \label{appx:exp_approx}
In this section, we detail the implementation of the polynomial approximation function $\mathsf{AExp}(x) \approx \exp(x)$, which constitutes the sole polynomial approximation step within the CGF-softmax framework.
The configuration of CGF-softmax--specifically, whether it operates as the \emph{standard configuration} or the efficient \emph{low-degree variant}--is determined by the timing of approximation relative to the training phase.

\subsection{Standard Configuration}
\label{appx:standard}
The standard CGF-softmax configuration applies exponential approximations in \emph{post-training} to prioritize maximum accuracy recovery.
In this setting, the model is trained using the exact CGF-softmax formulation, and the polynomial approximation is introduced only during the inference phase.
We utilize two distinct polynomial approximation methods for $\mathsf{AExp}(x)$.

\paragraph{Chebyshev Polynomial Approximation.}
To accommodate the wide dynamic range of input values while maintaining high precision, we leverage the exponential identity $\exp(x) = (\exp(x/2^k))^{2^k}$.
This formulation allows us to map the input to a reduced interval (e.g., $[-8, 0]$ following~\cite{Cho2024fast}) by scaling the domain by a factor of $1/2^k$.
Within this reduced interval, we employ a degree $d=15$ Chebyshev polynomial, with coefficients determined by a least-squares approximation over the target interval.
To evaluate this efficiently, we adopt the Paterson-Stockmeyer algorithm~\cite{Paterson1973on}, which reduces the multiplicative complexity of polynomial evaluation.
While this method approximates the exponential function with high precision over a wide interval, it requires a relatively high multiplicative depth, calculated as $\lceil \log_2(d+1) \rceil + k + 1$, due to the combination of polynomial evaluation and the subsequent $k$ squaring steps required to recover the original scale.

\paragraph{Limit Approximation.}
We implement this as $\mathsf{AExp}(x) = (1 + x/2^k)^{2^k}$.
This approach is computationally efficient, involving only a single scalar multiplication followed by $k$ repeated squarings.
Consequently, it consumes less multiplicative depth, requiring only $k+1$.
However, compared to the Chebyshev method, the Limit approximation is less robust, as the approximation error tends to increase rapidly when the input values fall outside the target domain.

The specific values of the parameter $k$ utilized in our experiments are detailed in Table~\ref{tab:k_values}. In the reported results of Figure~\ref{fig:llama_depth} and Figure~\ref{fig:vit_depth}, we select the approximation method that yields better performance.

\begin{table}[ht]
    \centering
    \caption{Specific values of $k$ employed for the CGF-softmax standard configuration across different models and datasets. All reported values achieve inference accuracy within \SI{1}{\%} of the pre-trained exact softmax baseline.}    \label{tab:k_values}
    \resizebox{0.9\linewidth}{!}{
    \begin{tabular}{lccccccc}
        \toprule
        \textbf{Model} & \multicolumn{3}{c}{\textbf{LLaMA-3.2-1B}} & \textbf{ViT-Base} & \textbf{DeiT-Base} & \textbf{ViT-Tiny} & \textbf{DeiT-Tiny} \\
        \cmidrule(lr){2-4} \cmidrule(lr){5-8}
        \textbf{Dataset} & \small{Clinc150} & \small{Banking77} & \small{SST-2} & \multicolumn{4}{c}{ImageNet-1k} \\
        \midrule
        \textbf{Chebyshev approx.} & 3 & 3 & 4 & 6 & 3 & 1 & 1 \\
        \textbf{Limit approx.} & 6 & 6 & 7 & 8 & 6 & 8 & 7 \\
        \bottomrule
    \end{tabular}
    }
\end{table}

\subsection{Low-degree Variant}
\label{appx:low_degree}
The low-degree variant is designed for scenarios requiring maximum efficiency with minimal multiplicative depth.
In this approach, the approximation is applied \emph{in-training}; the exact exponential function is replaced by a simplified polynomial approximation \emph{before} the training phase begins.
We utilize a Taylor expansion of degree $d$ centered at $x_0$, formulated as:
\begin{equation}
    \mathsf{AExp}(x) = \sum_{i=0}^{d} \frac{e^{x_0}}{i!} (x - x_0)^i
\end{equation}
A key advantage of this method is that it eliminates the need for domain scaling, thereby avoiding the additional overhead associated with squaring operations.
Total multiplicative depth required for this operation is $\lceil \log_2(d+1) \rceil$.

\section{Training Details} \label{appx:search_space}
BPMax employs a knowledge distillation (KD) framework, where the original pre-trained model serves as the \emph{teacher} and the model with the replaced function acts as the \emph{student}.
We adopt the same KD framework to \cgfsoftmax for fair comparison.

\paragraph{Search Space and Configurations.}
To ensure a rigorous evaluation, we conducted an extensive hyperparameter sweep for both LLaMA-3.2-1B and ViT/DeiT.
Table~\ref{tab:configurations} details the specific search grids.
For BPMax, which replace softmax with simplified polynomial alternatives $(x+c)^p$, we significantly expanded the search space beyond the original study~\cite{Park2025Powerformer}--which limited the search to shift variable $c \in \{1,3,5,7\}$ and degrees $p \in \{1,3,5,7\}$--to guarantee optimal baselines.

We set the maximum training budget for BPMax to 20 epochs while applying early stopping to all experiments (including CGF-softmax).
Under these conditions, CGF-softmax demonstrated significantly faster training speeds than the baseline: LLaMA-3.2-1B required only 5 epochs.
For ViT/DeiT models, we employed \emph{Attention-only Tuning}—a strategy where we freeze all model parameters and exclusively update the weights associated with the attention mechanism.
This approach provided the dual benefit of minimizing the training process and facilitating effective accuracy recovery.

\paragraph{Fixed Configurations.}
Regarding input specifications, we set the sequence length to $n=197$ for ViT/DeiT models and the maximum sequence length to 128 for LLaMA-3.2-1B.
For the KD objective, we utilized a combination of Cross-Entropy loss and KL-Divergence loss, setting the temperature to $\tau=2.0$ and the distillation weight to $\alpha=0.5$. 
We fixed the random seed to 42 across all experimental runs.

\begin{table}[ht]
    \centering
    \small
    \caption{Configuration search space and fixed settings for large language models and Vision Transformers. In the Teacher Model row, `Self' refers to the original backbone prior to replacing the softmax, while `ViT-Large (85.84\%)' denotes the pre-trained model sourced from the \texttt{timm} library.}
    \label{tab:configurations}
    \renewcommand{\arraystretch}{1.3} 
    \resizebox{0.9\linewidth}{!}{
    \begin{tabular}{p{4.0cm}cc}
        \toprule
        \textbf{Configuration} & \textbf{LLaMA-3.2-1B} & \textbf{Vision Transformers (ViT/DeiT)} \\
        \midrule
        \multicolumn{3}{l}{\emph{\textbf{CGF-softmax (low degree variant)}}} \\
        \quad Taylor Degree ($d$) & \{$1$\} & \{3, 5\} \\
        \quad Expansion Point ($x_0$) & \{$-3$\} & \{$-10$\} \\
        \midrule
        \multicolumn{3}{l}{\emph{\textbf{BPMax}}} \\
        \quad Degree ($p$) & \{1, 3, 5, 7\} & \{1, 3, 5, 7, 9\} \\
        \quad Scaling ($c$) & \{1, 3, 5, 7, 20\} & \{1, 3, 5, 7, 9, 20, 40, 60\} \\
        \midrule
        \multicolumn{3}{l}{\emph{\textbf{Optimization}}} \\
        \quad Epochs (CGF-softmax) & \{5\} & \{15\} \\
        \quad Epochs (BPMax) & \{20\} & \{20\} \\
        \quad Batch Size & \{32, 64\} & \{64, 128\} \\
        \quad Learning Rate & \{9e-6, 1e-5, 3e-5, 5e-5\} & \{1e-5, 5e-5, 1e-4\} \\
        \midrule
        \multicolumn{3}{l}{\emph{\textbf{Regularization \& Strategy}}} \\
        \quad Drop Path Rate & N/A & \{0.0, 0.1, 0.2\} \\
        \quad Attention-only Tuning & \{Off\} & \{On, Off\} \\
        \midrule
        \multicolumn{3}{l}{\emph{\textbf{Knowledge Distillation}}} \\
        \quad Teacher Model & \{Self\} & \{Self, ViT-Large (85.84\%)\} \\
        \bottomrule
    \end{tabular}
    }
\end{table}

\section{Alternative Training Pipeline: Replace Softmax Before Supervised Fine-Tuning} \label{appx:alt_pipeline}

\begin{table}[b]
    \centering
    \caption{Accuracy and training configuration comparison for LLaMA-3.2-1B downstream adaptation.}
    \label{tab:alt_pipeline}
    \resizebox{\linewidth}{!}{
    \begin{tabular}{llccccccc}
        \toprule
        \multirow{2}{*}{\textbf{Method}} & \multicolumn{4}{c}{\textbf{Training Settings}} & \multicolumn{3}{c}{\textbf{Evaluation Accuracy}} \\
        \cmidrule(lr){2-5} \cmidrule(lr){6-8}
        & \textbf{Base Model} & \textbf{Softmax} & \textbf{Training Method} & \textbf{Epochs} & \textbf{Clinc150} & \textbf{Banking77} & \textbf{SST-2} \\
        \midrule
        Exact softmax baseline & HF Pre-trained & Exact & SFT & 3 & 86.67\% & 92.99\% & 93.11\% \\
        \cgfsoftmax pipeline & Exact softmax baseline & \cgfsoftmax & KD & 5 & 87.02\% & 92.86\% & 92.43\% \\
        BPMax pipeline & Exact softmax baseline & BPMax & KD & 20 & 81.40\% & 91.13\% & 87.15\% \\
        \midrule
        \textbf{Alternative pipeline} & \textbf{HF Pre-trained} & \textbf{\cgfsoftmax} & \textbf{SFT (LoRA)} & \textbf{3} & \textbf{87.09\%} & \textbf{92.04\%} & \textbf{93.00\%} \\
        \bottomrule
    \end{tabular}
    }
\end{table}
When evaluating large language models (LLMs) like LLaMA-3.2-1B on specific downstream tasks, applying Supervised Fine-Tuning (SFT) is standard practice. 
In our main experiments (Figures~\ref{fig:llama_depth} and \ref{fig:vit_depth}), the ``exact softmax'' baselines were established by performing SFT directly on the pre-trained models downloaded from Hugging Face (HF).
Subsequently, to ensure a fair evaluation of the softmax replacement methods, we replaced the softmax function in these already fine-tuned models and conducted an additional Knowledge Distillation (KD) phase (as detailed in Appendix~\ref{appx:search_space}) to create the BPMax and \cgfsoftmax models.
While \cgfsoftmax inherently requires a significantly lower training cost than BPMax during this KD phase, this section introduces an alternative pipeline that removes the \emph{additional} adaptation training cost. 

In this alternative pipeline, we replace the exact softmax with \cgfsoftmax immediately on the base pre-trained model, \emph{prior} to any task-specific tuning. 
We then proceed directly with a standard Low-Rank Adaptation (LoRA) \cite{Hu2022lora} SFT phase. 
As detailed in Table~\ref{tab:alt_pipeline}, this unified approach achieves an accuracy within \SI{1}{\%} of the exact softmax baseline using the exact same number of training epochs.
Because evaluating downstream tasks fundamentally requires an SFT phase, integrating \cgfsoftmax directly into this mandatory process eliminates the need for any subsequent KD steps.
Consequently, adapting the model to \cgfsoftmax incurs virtually zero extra training overhead.


\end{document}